\documentclass[sigconf]{acmart}

\AtBeginDocument{%
  }

\setcopyright{acmlicensed}
\copyrightyear{2018}
\acmYear{2018}
\acmDOI{XXXXXXX.XXXXXXX}
\acmConference[Conference acronym 'XX]{Make sure to enter the correct
  conference title from your rights confirmation email}{June 03--05,
  2018}{Woodstock, NY}
\acmISBN{978-1-4503-XXXX-X/2018/06}

\usepackage{enumitem}
\usepackage{multirow}
\usepackage{amsmath}
\usepackage{amsthm}
\usepackage{subfigure}
\usepackage{arydshln}
\usepackage{tabularray}
\usepackage{xcolor, colortbl}
\usepackage{graphicx}
\usepackage{subcaption}
\usepackage{caption}

\newcommand\blfootnote[1]{%
  \begingroup
  \renewcommand\thefootnote{}\footnote{#1}%
  \addtocounter{footnote}{-1}%
  \endgroup
}

\begin{document}

\settopmatter{printacmref=false, printfolios=false, printccs=false}
\renewcommand\footnotetextcopyrightpermission[1]{} 




\title[Generative Recommendation for Large-Scale Advertising]{Generative Recommendation for Large-Scale Advertising}

\author{Ben Xue*, Dan Liu*, Lixiang Wang*, Mingjie Sun*, Peng Wang*, Pengfei Zhang*, Shaoyun Shi*, Tianyu Xu*, Yunhao Sha*, Zhiqiang Liu*,    Bo Kong, Bo Wang, Hang Yang, Jieting Xue, Junhao Wang, Shengyu Wang, Shuping Hui, Wencai Ye, Xiao Lin, Yongzhi Li, Yuhang Chen, Zhihui Yin,\\    Quan Chen, Shiyang Wen, Wenjin Wu, Han Li, Guorui Zhou, Changcheng Li, Peng Jiang$\dag$, Kun Gai}

\affiliation{%
  \institution{Kuaishou Technology}
  \city{Beijing}
  \country{China}
  \postcode{100080}
}

\renewcommand{\shortauthors}{Peng Jiang et al.}

\begin{abstract}

Generative recommendation has recently attracted widespread attention in industry due to its potential for scaling and stronger model capacity.
However, deploying real-time generative recommendation in large-scale advertising requires designs beyond large-language-model (LLM)-style training and serving recipes.
We present a production-oriented generative recommender co-designed across architecture, learning, and serving, named \textbf{GR4AD} (\textbf{G}enerative \textbf{R}ecommendation for \textbf{AD}dvertising). 
As for tokenization, GR4AD proposes UA-SID (Unified Advertisement Semantic ID) to capture complicated business information.
 Furthermore, GR4AD introduces LazyAR, a lazy autoregressive decoder that relaxes layer-wise dependencies for short, multi-candidate generation, preserving effectiveness while reducing inference cost, which facilitates scaling under fixed serving budgets. 
To align optimization with business value, GR4AD employs VSL (Value-Aware Supervised Learning) and proposes RSPO (Ranking-Guided Softmax Preference Optimization), a \emph{ranking-aware, list-wise} reinforcement learning algorithm that optimizes value-based rewards under list-level metrics for continual online updates. For online inference, we further propose dynamic beam serving, which adapts beam width across generation levels and online load to control compute. 
Large-scale online A/B tests show up to 4.2\% ad revenue improvement over an existing DLRM-based stack, with consistent gains from both model scaling and inference-time scaling. GR4AD has been fully deployed in Kuaishou advertising system with over 400 million users and achieves high-throughput real-time serving.

\end{abstract}



\keywords{Generative Recommendation, Advertising}



\maketitle

\blfootnote{* Authors contributed equally and are listed in alphabetical order.}
\blfootnote{$\dag$ Corresponding author.}

\vspace{-3em}
\section{Introduction}

Recent industrial efforts on recommender systems have started converting deep learning recommendation models (DLRMs) into generative recommenders, driven by their potential for scaling and stronger model capacity, with significant gains across a range of business scenarios~\cite{zhou2025onerec,zhou2025onerecv2,zhang2025gpr,hstu2024,MTGR2025,scaling2024seqrec}.

Despite encouraging progress, deploying real-time generative recommendation models in large-scale advertising systems remains challenging. The unique characteristics of advertising recommendation make a direct reuse of LLM techniques insufficient in several aspects.
(1) \textbf{Advertisement Tokenization}. 
Tokenization is fundamental to LLM generation and particularly challenging for advertising, where short-video creatives fuse video attributes, product details, and B2B advertiser metadata. Prior work incorporates multimodal cues or pretrained MLLMs for content understanding, but no end-to-end, fine-tuned advertisement LLM embedding exists. 
Furthermore, platforms expose salient business signals (e.g., conversion type, ads account) that are not readily captured by semantic content,
so jointly modeling multimodal, multi-granularity features for user interest and business value remains a core challenge.
(2) \textbf{Learning Paradigm}. Advertising recommendation optimizes \emph{ranked lists} under business objectives (e.g., eCPM) and list-wise metrics (e.g., NDCG), which are not well captured by per-item supervision. Existing approaches, largely following LLM-style training recipes, lack a ranking-aware, list-wise learning design tailored to online advertising learning~\cite{chen2025onesearch,zheng2024adapting,wang2024learnable,zhou2025openonerec}.
(3) \textbf{Real-Time Serving}. Generative recommendation does not remove the stringent serving constraints of DLRM-based stacks: the system must produce multiple high-quality candidates under high traffic and strict latency budgets. This setting differs fundamentally from interactive LLM usage, where decoding a single response can tolerate substantially longer latency. Beyond optimizations borrowed from LLM inference, serving-time efficiency for real-time multi-candidate generation in advertising remains under-explored in a systematic way.

To address the above gaps, we present a production-oriented generative recommender designed for real-time, large-scale advertising, named \textbf{GR4AD} (\textbf{G}enerative \textbf{R}ecommendation for \textbf{AD}dvertising). GR4AD adopts a recommendation-native co-design across representation, learning, and serving.
(1) \textbf{Unified Advertisement Semantic ID}. We propose UA-SID, derived from a fine‑tuned MLLM embedding trained on real‑world advertisement creatives via instruction tuning and co‑occurrence learning to capture complicated information. Furthermore, we introduce MGMR (Multi‑Granularity-Multi‑Resolution) RQ‑Kmeans quantization method to model non semantic information, substantially reduce SID collisions, and improve codebook utilization.
(2) \textbf{Value-Aware Online Learning}. To align optimization with business value in non-stationary markets, we design VSL (Value-Aware Supervised Learning) to learn the user-interest distribution, and propose RSPO (Ranking-Guided Softmax Preference Optimization), a ranking-guided, list-wise RL algorithm that explicitly optimizes list-level objectives. We further develop an online learning framework that tightly integrates VSL and RL for continual, frequent model updates.
(3) \textbf{Recommendation-Oriented Efficiency Optimizations}. To satisfy strict real-time constraints, we systematically optimize decoding and serving. We propose a new LazyAR decoder architecture to relax layer-wise autoregressive dependencies and boost decoding throughput without hurting effectiveness. Beyond LLM-style optimizations, we further introduce recommendation-specific serving techniques, including Dynamic Beam Serving (DBS)—with Dynamic Beam Width (DBW) and Traffic-Aware Adaptive Beam Search (TABS)—as well as a short time-to-live (TTL) cache.

Online A/B tests show that GR4AD delivers up to 4.2\% ad revenue improvement compared to the existing DLRM-based stack, with consistent gains from both model scaling and inference-time scaling. With systematic efficiency optimizations in architecture and serving, GR4AD achieves <100ms latency and 500+ QPS per L20 under practical resource budgets, and has been fully deployed in Kuaishou advertising system serving over 400 million users.

\section{Related Works}
\subsection{Generative Recommendation}
Recent advances in generative models~\cite{gibbs,gan,dm2020,vae2019,gpt}, particularly Large Language Models (LLMs)~\cite{gpt}, have catalyzed a new paradigm in recommendation systems based on end-to-end generation. A representative line of work is Semantic ID–based generative recommendation~\cite{JuCNKW0S25}, which encodes items as discrete semantic identifiers (SIDs) and formulates recommendation as next-token prediction. TIGER~\cite{rajput2023tiger} is a seminal work that introduces hierarchical SIDs via residual quantization of item features and models recommendation as a generative retrieval task using an encoder–decoder Transformer. Building on this foundation, subsequent studies such as LC-Rec~\cite{zheng2024adapting}, LETTER~\cite{wang2024learnable}, and OpenOneRec~\cite{zhou2025openonerec} improve scalability and generalization by aligning item identifiers with collaborative semantics through two-stage training. OneRec~\cite{zhou2025onerec,zhou2025onerecv2} unifies retrieval and ranking within a single generative model and has shown strong performance in large-scale industrial recommendation, while GPR~\cite{zhang2025gpr} and OneSearch~\cite{chen2025onesearch} extend the generative paradigm to advertising and e-commerce search, respectively. To address the inference inefficiency of autoregressive generation, RPG~\cite{hou2025generating} and NEZHA~\cite{wang2025nezha} propose parallel and hyperspeed decoding mechanisms, and MMQ~\cite{xu2025mmq} further generalizes generative recommendation to multimodal settings via discrete quantization.

Despite these advances, existing generative recommenders largely lack architectures and learning strategies specifically designed for advertising systems, particularly under online learning constraints.

\subsection{Reinforcement Learning}

Reinforcement learning has become a central paradigm for LLM preference optimization. Early RLHF \cite{ouyang2022training} formulates alignment as policy optimization over a learned reward model, but its instability and engineering cost have motivated simpler preference-based alternatives. DPO \cite{rafailov2023direct} removes explicit reward modeling by optimizing directly over preference pairs, followed by variants such as SimPO \cite{meng2024simpo}, GRPO \cite{shao2024deepseekmath}, SAPO \cite{gao2025soft}, and GDPO \cite{liu2026gdpo}, which further improve stability or multi-objective learning. However, these methods are largely offline, relying on static preference data or fixed rollouts, and mainly address distributional alignment rather than continual learning in non-stationary environments.
In recommender, search, and advertising systems, reinforcement learning has been increasingly adopted to align large generative models with user behavior and business objectives. Works such as OneRec \cite{zhou2025onerec, zhou2025onerecv2}, OneSearch \cite{chen2025onesearch}, and GPR \cite{zhang2025gpr} adapt policy optimization via reward shaping, preference weighting, or hierarchical objectives, but often depend on multi-stage pipelines or naive combinations of SFT and RL, limiting adaptability to streaming settings. To address this, recent studies including SRFT \cite{fu2025srft}, CHORD \cite{zhang2025policy}, and the unified policy optimization framework in \cite{HPT} propose principled joint SFT–RSPO optimization through dynamic weighting or shared objectives. Building on these ideas, we design an efficient SFT–RSPO integration framework tailored to streaming recommender scenarios.
\subsection{Semantic IDs}
In traditional advertising systems, each item is assigned a sequentially increasing ID, which suffers from data sparsity and hampers cold-start. In contrast, Semantic IDs leverage LLMs to encode item content (text, video, etc.) into embeddings, which are then hierarchically quantized into structured identifiers. 
Quantization methods fall into two main categories: RQ-VAE-based and clustering-based. TIGER~\cite{rajput2023tiger} uses RQ-VAE to recursively quantize residuals for hierarchical IDs. FORCE~\cite{fu2025forge} introduces a cross-entropy equalization loss for balanced codebook usage. 
PLUM~\cite{he2025plum} uses multi-resolution codebooks and progressive masking for clear hierarchical structure. COST~\cite{zhu2024cost} proposes a contrastive learning framework to optimize the codebook for more discriminative IDs. QARM~\cite{luo2024qarm} adopts RQ‑Kmeans with cardinality constraints to dynamically regulate codebook usage, improving balance and efficiency. Furthermore, methods like ColaRec~\cite{wang2024content}, SEATER~\cite{si2024generative}, and TokenRec~\cite{qu2024tokenrec} integrate collaborative filtering: they first derive collaborative signal-enriched embeddings from user-item interactions, then encode them into discrete semantic IDs, injecting collaborative inductive bias into the generative recommendation framework.

\section{Methodology}

In this section, we detail the model architecture and training paradigm of \textbf{GR4AD}. Figure~\ref{fig:great} provides an overview of the method. 

\begin{figure*}[htbp]
\vspace{-1em}
        \centering
        \includegraphics[width=0.9\linewidth]{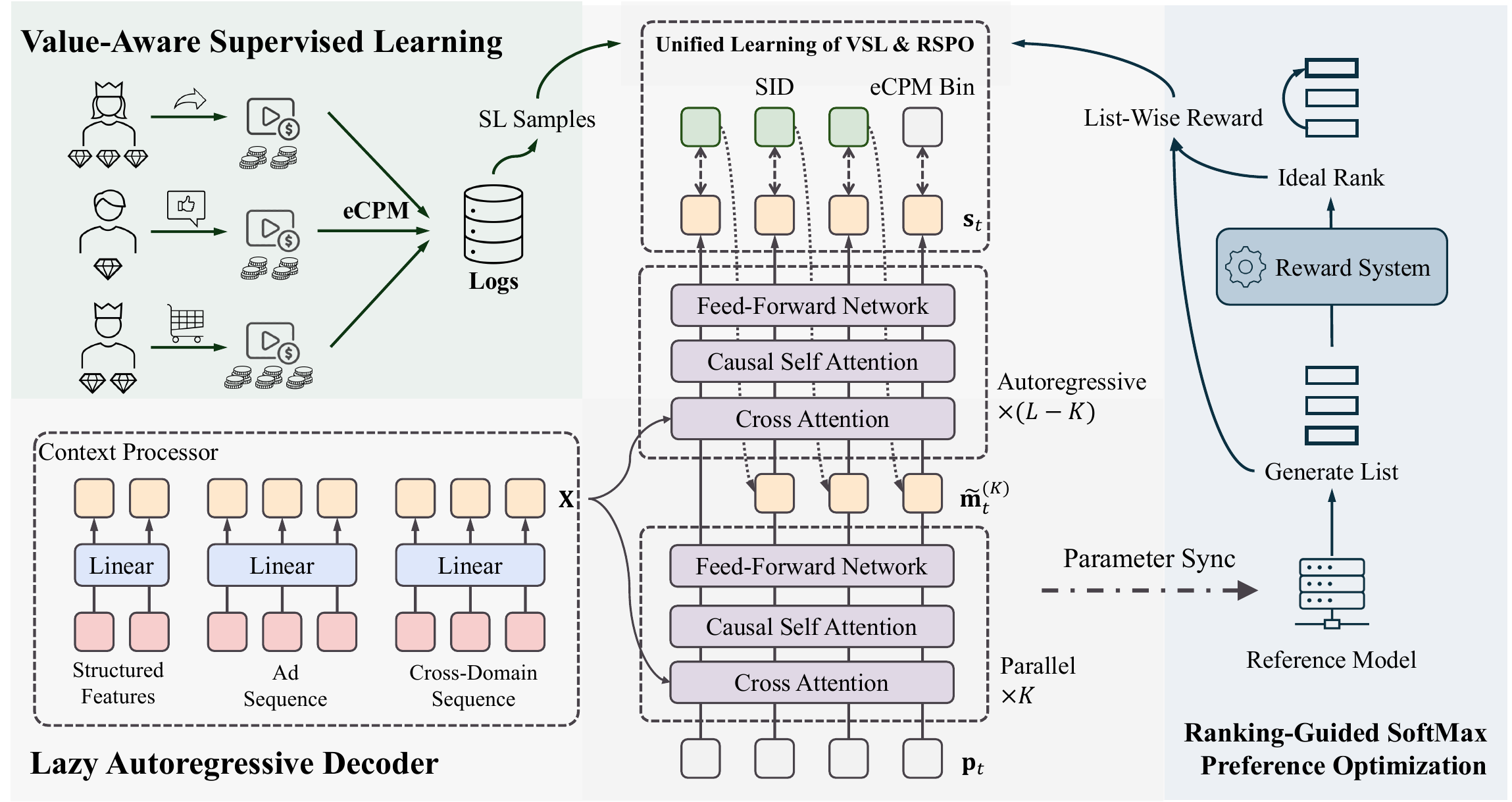}
\vspace{-1em}
    \caption{Overview of our proposed GR4AD: model architecture and learning algorithm.}
    \label{fig:great}
\vspace{-1em}
\end{figure*}

\subsection{Unified Advertisement Semantic ID}

\begin{figure}[htbp]
\vspace{-0.5em}
        \centering
        \includegraphics[width=0.95\linewidth]{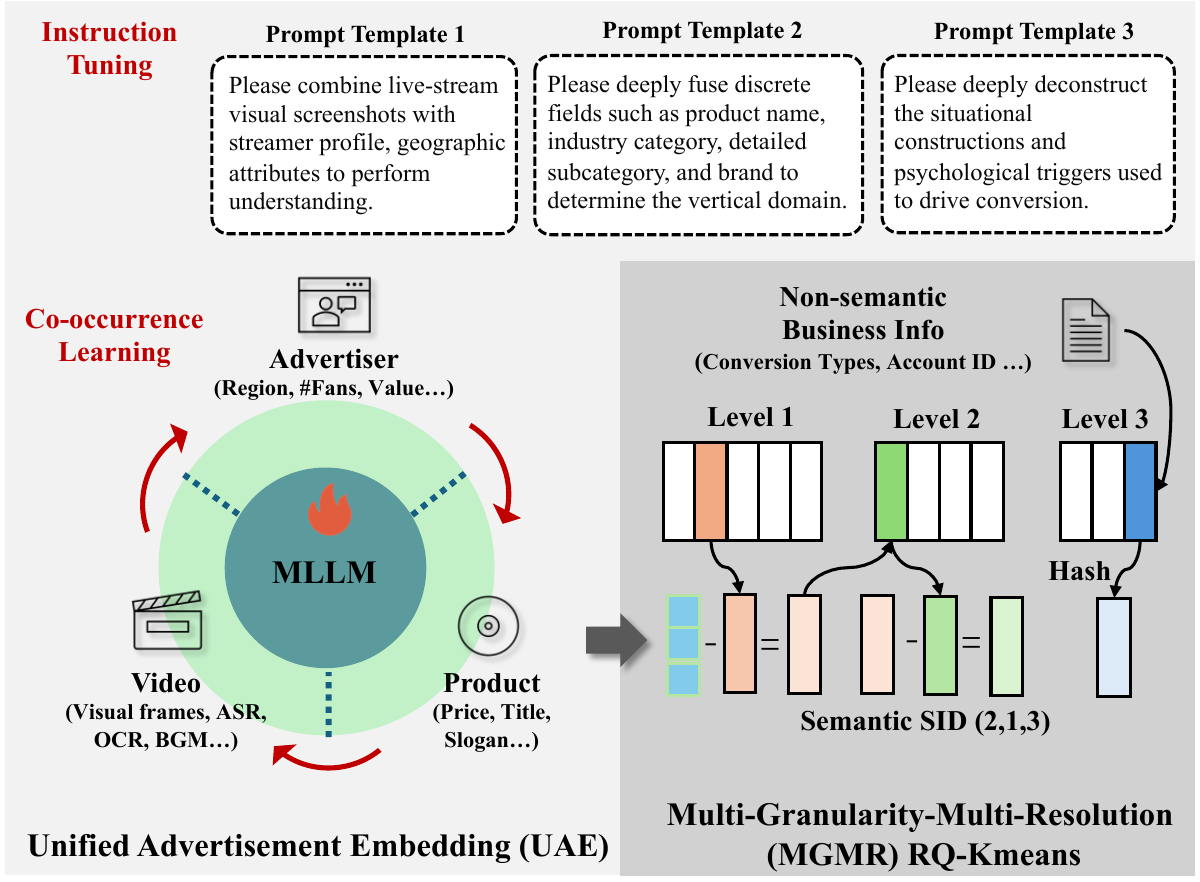}
\vspace{-0.5em}
    \caption{Illustration of Unified Advertisement Semantic ID. }
    \label{fig:ua-sid}
\vspace{-2em}
\end{figure}

In generative recommendation, Semantic IDs play a role analogous to tokens in large language models, where both representation granularity and quantization quality critically affect downstream performance. We therefore design \emph{Unified Advertisement Semantic IDs} (\textbf{UA-SID}) which is derived from an end-to-end finetuned advertisement LLM embedding model, as illustrated in the Figure~\ref{fig:ua-sid}.

\subsubsection{Unified Advertisement Embedding (UAE)}




 \paragraph{Instruction Tuning (IT)}
To capture heterogeneous content formats in advertising (video, product and advertiser), we employ instruction-based fine-tuning of an LLM to instill ad-understanding capabilities across diverse aspects and scenarios.
For example, for a live-stream host we prompt the model to analyze their profile and geographic information, while for ordinary off-platform advertisers we direct the model to focus on the product’s industry and brand information.
We designed 6 instructions in total to comprehensively cover Kuaishou’s ad types (presented in the appendix).


\paragraph{Co-occurrence Learning (CL)}

To incorporate collaborative signals from user behavior, we augment representation learning with a co-occurrence contrastive objective. Item co-occurrence strengths are estimated using the Swing method~\cite{yang2020large}. 
We further augment postive samples $\mathcal{P}_i$ with respect to each item triplets (Video, Product, Advertiser), where the co-occurring pair is treated as positives, while other in-batch samples serve as negatives. 
We adopt the InfoNCE:
\begin{equation}
\mathcal{L}_{\mathrm{NCE}}(i)
= - \log
\frac{
\sum\limits_{j \in \mathcal{P}_i}
\exp\!\left( \mathbf{z}_i^\top \mathbf{z}_j / \tau \right)
}{
\sum\limits_{k \neq i}
\exp\!\left( \mathbf{z}_i^\top \mathbf{z}_k / \tau \right)
},
\end{equation}
where $\mathbf{z}$ represents the last hidden states of MLLM.


\subsubsection{Multi-Granularity-Multi-Resolution (MGMR) RQ-Kmeans}

The stability and balance of the UA-SID codebook are critical for effective generative retrieval. 
Lower codebook utilization and higher SID collision ratio can degrade downstream performance.

Standard RQ-Kmeans with fixed codebook size often suffers from low codebook utilization~\cite{luo2024qarm}. 
Inspired by~\cite{he2025plum}, we adopt a \emph{Multi-Resolution (MR)} RQ-Kmeans scheme. To better preserve semantic separability, lower levels use larger codebooks to capture dominant factors early, while higher levels model lower-entropy residuals. 
Balanced K-means clustering is applied at each level to improve codebook utilization. 

SID collisions are also acute in advertising, identical-content ads can exhibit entirely different delivery trajectories if advertisers target distinct conversion types.  
In fact, ad systems routinely expose such \emph{Multi-Granularity (MG)}, predominantly numeric signals that lack conventional semantics. Therefore, we replace vector quantization at the final layer with a hash-based numeric mapping derived from non-semantic features (e.g., item/account IDs, conversion types), which markedly improves global balance and reduces collisions with negligible extra quantization error.


Finally, each item is mapped to a discrete UA-SID sequence
\begin{equation}
\mathbf{y} = (s_{1}, s_{2}, \ldots, s_{T}), \quad s_t \in \mathcal{V}_{t},
\end{equation}
where $s_t$ denotes the token at level $t$, $T$ is the UA-SID depth (typically small), and $\mathcal{V}_{t}$ is the vocabulary at level $t$.


\subsection{Lazy Autoregressive Decoder}
\label{sec:lazyar}

For structured features and user interaction sequences, we follow LazyDecoder~\cite{zhou2025onerecv2} and adopt a lightweight linear context processor to efficiently model these heterogeneous contexts, whose output is denoted as $\mathbf{X} = (\mathbf{x}_1, \ldots, \mathbf{x}_S)$ with $\mathbf{x}_i \in \mathbb{R}^{d}$, where $S$ is the context length and $d$ is the hidden size. The decoder then generates the target UA-SID $\mathbf{y}$.

\paragraph{Vanilla Autoregressive Decoding.}
Prior work commonly uses a standard autoregressive decoder that factorizes
\begin{equation}
p(\mathbf{y} \mid \mathbf{X}) = \prod_{t=1}^{T} p\!\left(s_t \mid s_{<t}, \mathbf{X}\right),
\label{eq:ar_factor}
\end{equation}
and feeds the embedding of the previous-level UA-SID back to the \emph{first} decoder layer at the next decoding step, as shown in Figure~\ref{fig:lazyar}(a). Concretely, let $\mathbf{s}_{t} \in \mathbb{R}^d$ be the embedding of $s_{t}$ (and $\mathbf{s}_{0}=\texttt{BOS}$) and $\mathbf{p}_t$ be the position embedding of step $t$. A naive decoder initializes the step-$t$ state with $\mathbf{h}^{(0)}_t=\mathbf{s}_{t-1}+\mathbf{p}_t$ and applies $L$ decoder layers:
\begin{equation}
\mathbf{h}^{(\ell)}_t = \mathrm{Dec}^{(\ell)}\!\left(\mathbf{h}^{(\ell-1)}_t, \mathbf{X}\right), 
\quad \ell=1,\ldots,L,
\end{equation}
followed by a classifier $p(s_t\mid \cdot)=\mathrm{Softmax}(\mathbf{W}_t\mathbf{h}^{(L)}_t)$.

\paragraph{LazyAR: Late-Inject Autoregression.}
In the experiments, we observed that the first-level UA-SID ($s_1$) typically has the largest loss and is the most important to learn, yet contributes little to beam-search cost: decoding starts from \texttt{BOS}, so the effective beam for $s_1$ is $1$, while beams become much larger for later levels ($t \gg 1$). As a result, most decoding compute is spent on later levels, which are empirically easier. Motivated by this mismatch between learning difficulty and inference cost, we ask whether we can reduce the decoding compute for later-level UA-SID without affecting the inference of the first level.

A straightforward approach is to add extra shallow decoders for later UA-SID levels (e.g., DeepSeek MTP~\cite{deepseek2024dv3}, shown in Figure~\ref{fig:lazyar}(b)), but this (i) introduces additional parameters, (ii) prevents early decoder layers from directly participating in later-level inference and (iii) empirically decreases the recommendation performance. Instead, we propose \textbf{LazyAR} (\textbf{L}azy \textbf{A}uto\textbf{R}egression), which delays the dependence on $s_{t-1}$ to an intermediate layer. Detailedly, given a chosen $K$ ($1 \le K < L$), for each UA-SID level $t$, LazyAR computes the first $K$ layers \emph{without} conditioning on $s_{t-1}$:
\begin{align}
\mathbf{m}^{(0)}_t &= \mathbf{p}_t, \\
\mathbf{m}^{(\ell)}_t &= \mathrm{Dec}^{(\ell)}\!\left(\mathbf{m}^{(\ell-1)}_t, \mathbf{X}\right),
\quad \ell=1,\ldots,K,
\label{eq:trunk}
\end{align}
then injects the previous-level UA-SID embedding at layer $K$ via a fusion operator:
\begin{equation}
\tilde{\mathbf{m}}^{(K)}_t = \mathrm{Fuse}\!\left(\mathbf{m}^{(K)}_t, \mathbf{s}_{t-1}\right),
\label{eq:fuse}
\end{equation}
and applies the remaining $L-K$ layers autoregressively:
\begin{equation}
\mathbf{h}^{(\ell)}_t = \mathrm{Dec}^{(\ell)}\!\left(\mathbf{h}^{(\ell-1)}_t, \mathbf{X}\right),
\quad \ell=K+1,\ldots,L,
\label{eq:head}
\end{equation}
with $\mathbf{h}^{(K)}_t \triangleq \tilde{\mathbf{m}}^{(K)}_t$ and
\begin{equation}
p(s_t \mid s_{<t}, \mathbf{X}) = \mathrm{Softmax}\!\left(\mathbf{W}_t\mathbf{h}^{(L)}_t\right).
\end{equation}

We implement $\mathrm{Fuse}(\cdot)$ as a lightweight gated projection:
\begin{equation}
\mathrm{Fuse}(\mathbf{m},\mathbf{s}) = \mathbf{W}_f[\mathbf{m}\odot(\mathbf{W}_g\mathbf{s});\mathbf{s}],
\end{equation}
where $[\cdot;\cdot]$ is concatenation and $\odot$ is element-wise product. Each decoder layer $\mathrm{Dec}^{(\ell)}(\cdot)$ follows OneRecV2~\cite{zhou2025onerecv2}, consisting of a cross-attention between $\mathbf{m}^{(\ell)}_t$/$\mathbf{h}^{(\ell)}_t$ and $\mathbf{X}$, a self-attention module over decoder states, and a feed-forward network. All submodules are equipped with pre-layer normalization to stabilize training and facilitate deeper architectures.


\begin{figure}[tbp]
\vspace{-0.5em}
        \centering
        \includegraphics[width=1.0\linewidth]{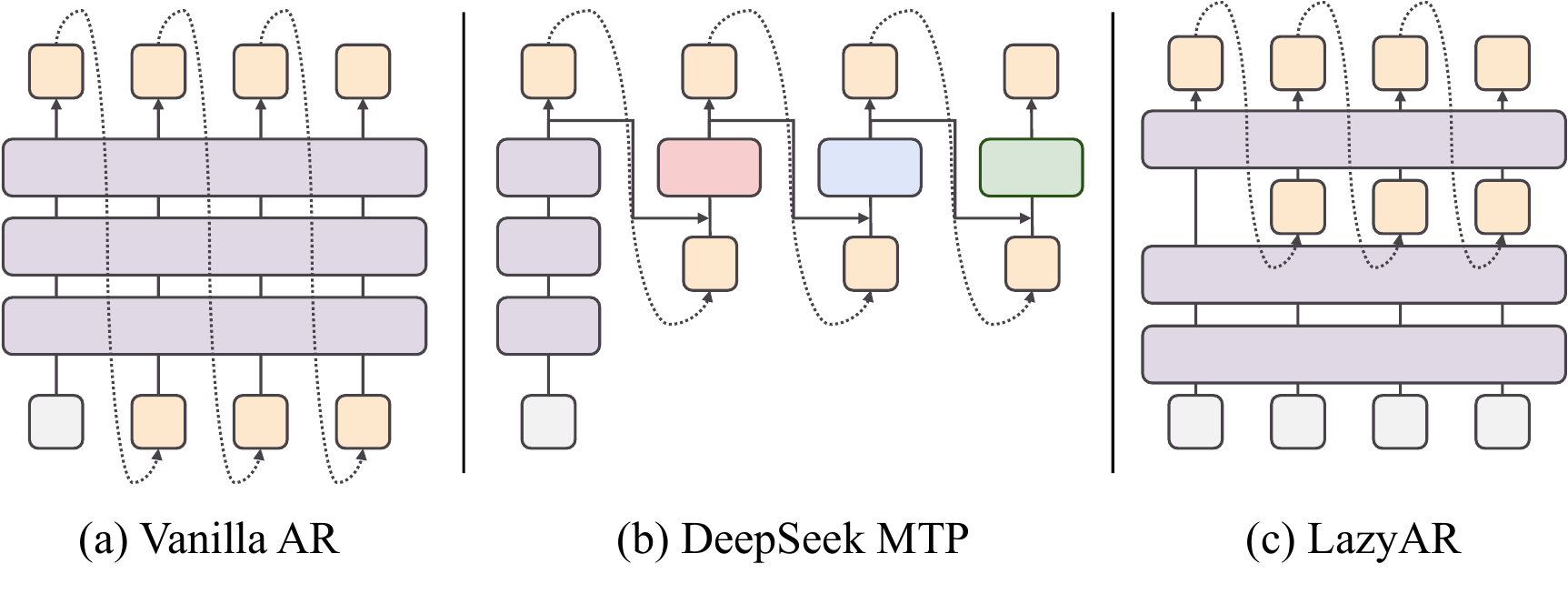}
\vspace{-2.5em}
    \caption{Comparison of Vanilla AR, DeepSeek MTP, LazyAR.}
    \label{fig:lazyar}
\vspace{-1em}
\end{figure}

\paragraph{Why LazyAR is faster.}
$\mathbf{m}^{(K)}_t$ does not depend on $s_{t-1}$, so the first $K$ layers can be computed in parallel for all levels and reused across beams:
\begin{equation}
\{\mathbf{m}^{(K)}_t\}_{t=1}^{T} = \mathrm{Dec}^{(1:K)}\!\left(\{\mathbf{p}_t\}_{t=1}^{T}, \mathbf{X}\right).
\end{equation}

Only the remaining $L-K$ layers (Eq.~\eqref{eq:head}) require autoregressive dependency, reducing sequential work per level during beam search. This is important because, at later UA-SID levels, decoding must be performed for a large number of beam candidates in parallel, making the cost of the autoregressive component a major bottleneck in overall decoding efficiency.

\paragraph{Why LazyAR preserves performance.}
(1) The decoding process for the first-level UA-SID remains unchanged: $s_1$ is still generated from a initialized embedding through the full $L$ decoder layers, ensuring no degradation on the hardest level.
(2) For later steps, LazyAR directly leverages the representations produced by the first $K$ layers. These intermediate states are not arbitrary -- they can reason in latent space and encode useful signals about plausible candidates at level $t$. 
To encourage the first $K$ layers to learn richer and more predictive latent representations, we introduce an MTP-style auxiliary loss. Concretely, we bypass the fusion projection and set $\mathbf{h}^{(K)}_t \triangleq \mathbf{m}^{(K)}_t$, forcing the trunk to provide sufficient information for downstream decoding even without explicit dependence on the previous-level UA-SID. 
In DeepSeek-MTP, both inputs to the fusion layer contain only information from the previous token, and the earlier decoder layers are used solely to support decoding at the first step.
(3) $K$ is a tunable hyper-parameter that provides a flexible accuracy-efficiency trade-off without changing the model size. When $K=1$, LazyAR reduces to naive autoregression. This design differs substantially from DeepSeek-MTP, where adding decoder layers for subsequent steps results in a substantial growth in model parameters. In our experiments, $K=\frac{2}{3}L$ preserves recommendation quality while doubling inference throughput. 
(4) Furthermore, because the first $K$ layers are computed only once and shared across beams, their impact on beam-search inference cost is significantly reduced. It suggests that these $K$ layers can potentially be extended to perform more inference steps and incorporate latent reasoning, opening up additional capacity for improving model effectiveness. We leave this exploration to future work.

\paragraph{Discussion}
This design is recommendation-specific and is less suitable for standard LLM decoding. In typical LLM serving, beam search is often not used (or uses a small beam), and the difficulty of predicting later tokens does not necessarily decrease, so deferring autoregressive dependency to later layers may yield limited speedup and offers no guarantee for long, variable-length generations. 

\subsection{Value-Aware Supervised Learning}

We train GR4AD using a Value-Aware Supervised Learning (VSL) objective tailored to advertising recommendation. Similar to language models, the core supervision signal is defined on discrete tokens and optimized via cross-entropy loss.
Given the UA-SID sequence $\mathbf{y}$, we first define a standard autoregressive token prediction loss over SID tokens:
\begin{equation}
\mathcal{L}_{\mathrm{SID}}
= - \sum_{t=1}^{T} \log p(s_t \mid s_{<t}, \mathbf{X}).
\end{equation}

\paragraph{ECPM-Aware Token Prediction.}
To better adapt the training objective to advertising scenarios, we further incorporate business value information by introducing an \emph{eCPM token}, which can be used to re-rank the generated SIDs. Specifically, we discretize the continuous eCPM values of training samples into equiprobable buckets and append the resulting eCPM token as an additional prediction step after the UA-SID sequence. The eCPM prediction is optimized using a cross-entropy loss:
\begin{equation}
\mathcal{L}_{\mathrm{eCPM}}
= - \log p(v \mid \mathbf{y}, \mathbf{X}),
\end{equation}
where $v$ denotes the discretized eCPM token. The combined next-token prediction loss is then
\begin{equation}
\mathcal{L}_{\mathrm{NTP}} = \mathcal{L}_{\mathrm{SID}} + \lambda_e\mathcal{L}_{\mathrm{eCPM}}.
\end{equation}

\paragraph{Value-Aware Sample Weighting.}
In advertising, training samples exhibit highly skewed value distributions. To reflect their heterogeneous importance, we apply a value-aware weighting scheme to all loss terms. Each sample is assigned a weight $w = w_{\mathrm{user}} \cdot w_{\mathrm{behavior}}$,
where $w_{\mathrm{user}}$ captures the long-term advertising value of the user, and $w_{\mathrm{behavior}}$ reflects the depth of user interaction (e.g., purchase actions receive higher weights than clicks). Samples from users with higher advertising value and deeper engagement thus contribute more strongly during training.

\paragraph{Auxiliary MTP Loss.}
As discussed in Section~\ref{sec:lazyar}, LazyAR computes the first $K$ decoder layers without conditioning on the previous token. To encourage these parallel layers to learn richer and more predictive representations, we introduce an auxiliary multi-token prediction (MTP) loss by setting $\mathbf{h}^{(K)}_t \triangleq \mathbf{m}^{(K)}_t$. Concretely, we require the trunk states to directly predict target tokens without relying on late-injected autoregressive signals. This auxiliary loss is applied during training only and is denoted as $\mathcal{L}_{\mathrm{MTP}}$.

The final VSL objective is defined as
\begin{equation}
\mathcal{L}_{\mathrm{VSL}}
= \mathbb{E}_{\mathcal{D}} \left[
w \left(
\mathcal{L}_{\mathrm{NTP}} + \lambda_{\mathrm{mtp}} \mathcal{L}_{\mathrm{MTP}}
\right)
\right],
\end{equation}
where $\lambda_{\mathrm{mtp}}$ controls the strength of the auxiliary MTP loss.

\subsection{Ranking-Guided Reinforcement Learning}

VSL enables GR4AD to learn a distribution over UA-SIDs that reflects historical user interests and advertising signals, but it mainly fits the logged data distribution and does not directly optimize downstream objectives. In particular, it does not explicitly favor UA-SIDs with higher advertising value or support exploration beyond observed behaviors. We therefore introduce a ranking-guided RL stage on top of VSL to encourage value-aware, list-level optimization while enabling controlled exploration, remaining grounded in the learned distribution.

\subsubsection{RSPO}
Unlike LLMs, recommendation systems aim to generate a ranked list rather than a single output, making per-item rewards insufficient to fully capture list-level optimization. Moreover, training samples in our setting are not limited to log feedback generated by the generative model itself, but also include samples collected from other production pipelines.

To address these challenges, we propose a list-wise RL method tailored to advertising recommendation, termed \textbf{RSPO} (Ranking-Guided Softmax Preference Optimization). Let $\mathcal{Y} = \{\mathbf{y}_1, \mathbf{y}_2, \ldots, \mathbf{y}_n\}$ denote the candidate list, and $v_i$ denotes the associated reward (eCPM) of $\mathbf{y}_i$. Inspired by Lambda framework~\cite{lambdaloss} and SDPO~\cite{SDPO}, instead of constructing chosen–rejected pairs using heuristic rules, RSPO directly aligns the RL objective with the ranking NDCG:
\begin{align}
\mathcal{L}_\text{RSPO}
&
    \begin{aligned}[t]
        & = -\mathbb{E}_{(X, \mathbf{y}_i, \mathcal{E}_i) \sim \mathcal{D}}\Bigg[ \log_{2} \sigma \Bigg( -\log \sum_{\mathbf{y}_j \in \mathcal{E}_i} \mathcal{M}_{ij} \exp \\
        &\quad \Bigg(\beta \log \frac{p_\theta(\mathbf{y}_j \mid X)}{{p_{ref}(\mathbf{y}_j \mid X)}^{C_{ij}}} - \beta \log \frac{p_\theta(\mathbf{y}_i \mid X)}{{p_{ref}(\mathbf{y}_i \mid X)}^{C_{ij}}} 
     \Bigg) \Bigg].
    \end{aligned}
\end{align}

Here, $\mathcal{E}_i = \{\mathbf{y}_j \mid v_j < v_i\} \subset \mathcal{Y}$ denotes the set of candidates ranked below $\mathbf{y}_i$. The coefficient $\mathcal{M}_{ij} = \left| \frac{1}{D_{|i-j|}} - \frac{1}{D_{|i-j|+1}} \right| |G_i - G_j|$ follows the standard Lambda formulation, where $G_i = \frac{2^{v_i} - 1}{Z}$, $D_i = \log_{2}(1 + i)$, and $Z$ is the ideal DCG. We show that $\mathcal{L}_\text{RSPO}$ is an upper bound of NDCGcost (proof in Appendix~\ref{app:rspo}) with respect to the ranking induced by 
$\log \frac{p_\theta(\mathbf{y}_i \mid X)}{{p_{ref}(\mathbf{y}_i \mid X)}^{C_{ij}}}$:
\begin{equation}
\begin{aligned}
\mathrm{NDCGcost} = \sum_{i=1}^{n} G_i - \sum_{i=1}^{n} \frac{G_i}{D_i} = \sum_{i=1}^{n} G_i - \mathrm{NDCG},
\end{aligned}
\end{equation}
$\beta$ controls the preference strength and $C_{ij}$ is a binary gate for reference availability and reliability. In production, training samples come from heterogeneous sources. Some lists are generated by GR4AD, for which we can record historical online predictions as a reference distribution $p_{ref}$; however, many samples are collected from other pipelines and thus have no reliable $p_{ref}$. Moreover, even for GR4AD-generated samples, $p_{ref}$ may become stale due to distribution drift and training--serving inconsistencies. When the current model predictions deviate too much from $p_{ref}$, enforcing the reference constraint can introduce noisy regularization and lead to unstable updates. We therefore enable $p_{ref}$ only when it is available and deemed reliable; otherwise, we drop the reference:
\begin{align}
C_{ij} = 
\begin{cases}
1 & \frac{1}{|\mathcal{E}_i \cup \{\mathbf{y}_i\}|} \sum_{\mathbf{y}_t \in \mathcal{E}_i \cup \{\mathbf{y}_i\}} \left| \log \frac{p_{\theta}(\mathbf{y}_t \mid X)}{p_{\text{ref}}(\mathbf{y}_t \mid X)} \right| < \delta \\
0 & \text{otherwise}
\end{cases},
\end{align}
where $\delta$ is a hyperparameter threshold.

\subsubsection{Unified Learning of VSL and RSPO}

In LLM training, pre-training, supervised fine-tuning, and reinforcement learning are typically conducted in separate stages. In our production setting, however, GR4AD is updated continuously via online learning, making it essential to \emph{jointly} integrate VSL and RSPO in a single training stream. Inspired by HPT~\cite{HPT}, we treat VSL as learning a stable \emph{base distribution} over user-interest items, while RSPO refines this distribution by biasing generation toward \emph{higher-value} items without drifting away from user relevance.

To balance imitation and exploration dynamically, we introduce a sample-level \emph{alignment score} that measures the mismatch between the model's current preference and the reward signal. For a candidate list of size $n$, let $r_p^{(i)}$ and $r_v^{(i)}$ be the ranks of candidate $i$ according to the model likelihood $p_{\theta}(\mathbf{y}_i \mid X)$ and its reward $v_i$ (eCPM), respectively. We define the normalized rank discrepancy:
\begin{equation}
A^{(i)} = \frac{|r_p^{(i)} - r_v^{(i)}|}{n-1}, \quad A^{(i)} \in [0,1].
\end{equation}

When $A^{(i)}$ is large, the model ranking deviates from the reward ranking, indicating insufficient imitation of the user-interest distribution; thus we increase the weight of VSL. When $A^{(i)}$ is small, the model is already broadly aligned, and we place more emphasis on RSPO to improve value-aware listwise optimization.

We assign per-sample weights for the two objectives as:
\begin{align}
w_{\mathrm{VSL}}^{(i)} &= w_0 \cdot \exp\!\Big(A^{(i)} \cdot \log(1+v_i)\Big), \\
w_{\mathrm{RL}}^{(i)}  &= w_0 \cdot Z_{\max}(1-A^{(i)}),
\end{align}
where $w_0$ is a base scaling factor and $Z_{\max}$ caps the RL weight.
The final unified online training objective is
\begin{equation}
\mathcal{L} = \mathbb{E}_{i \sim \mathcal{D}} \left[
w_{\mathrm{VSL}}^{(i)} \, \mathcal{L}_{\mathrm{VSL}}^{(i)} \;+\;
w_{\mathrm{RL}}^{(i)} \, \mathcal{L}_{\mathrm{RSPO}}^{(i)}
\right].
\end{equation}

\begin{figure}[tbp]
        \centering
        \includegraphics[width=1.0\linewidth]{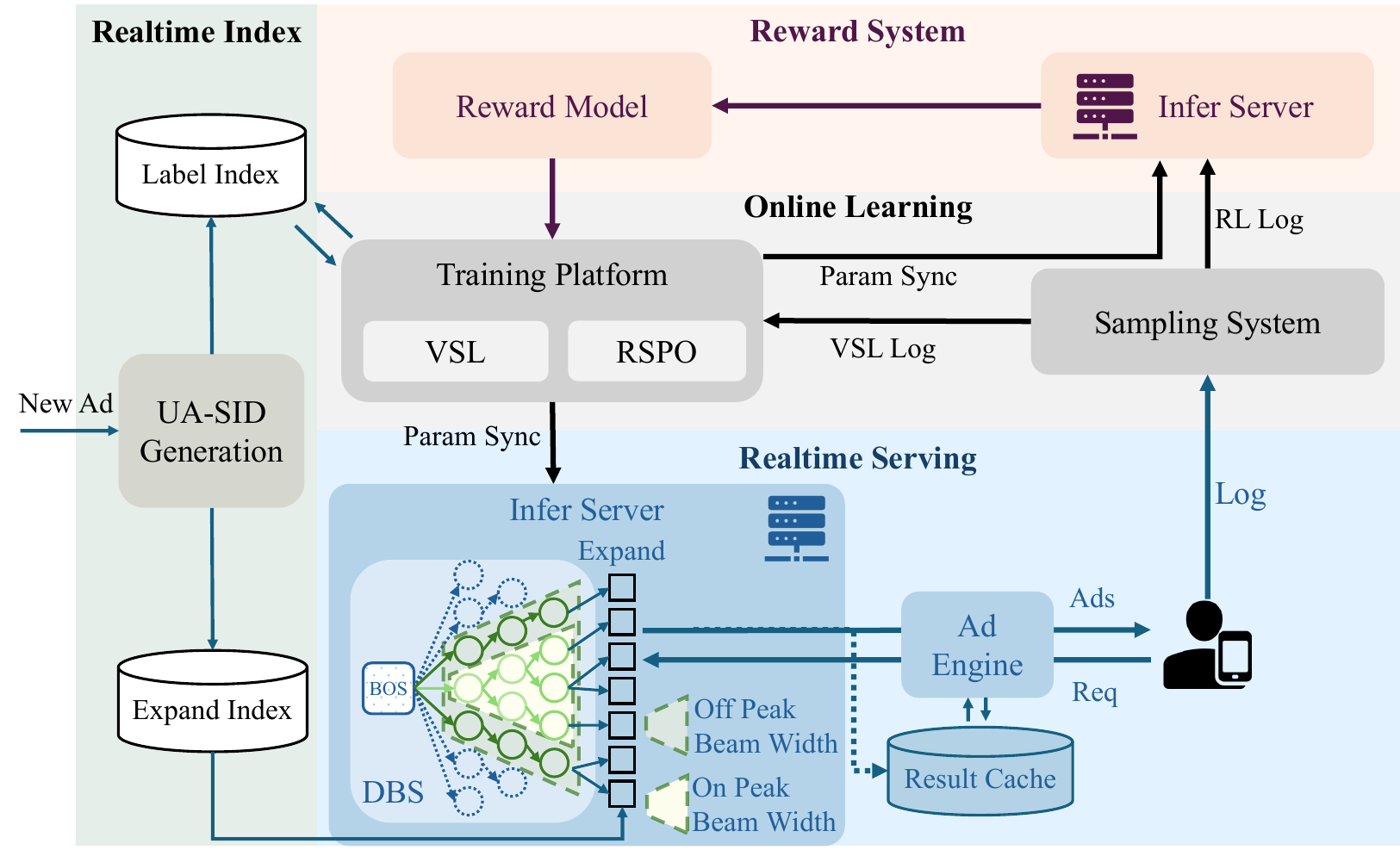}
\vspace{-2em}
    \caption{System overview: training and serving of GR4AD.}
    \label{fig:system}
\vspace{-1em}
\end{figure}

\section{Deployment}

We have deployed GR4AD (0.16B) in Kuaishou’s advertising system, serving over 400 million users. The system employs a closed-loop architecture integrating reward estimation, online learning, and real-time indexing for continuous model evolution, while specific efficiency optimizations ensure high-throughput (500+ QPS per L20), low-latency (<100ms) inference under fluctuating traffic.

\subsection{System Overview}
As shown in Figure~\ref{fig:system}, GR4AD operates as a closed-loop system with four components: a \textit{Realtime Serving} engine for request handling and ranking, a \textit{Realtime Index} module for Item-SID mapping, an \textit{Online Learning} module for continual VSL/RL updates, and a \textit{Reward System} that provides value-based feedback to close the loop.

\paragraph{Reward System}
The Reward System supplements online logs for RL training, since relying only on online exposed SL samples is (i) data-limited and insufficient to align optimization with revenue objectives (e.g., eCPM), and (ii) directly performing exploration in real-time serving can incur noticeable performance cost. We train a reward model on real exposure data and use it to score candidate sets generated by GR4AD. In the Reward System, latency constraints are relaxed, allowing GR4AD to generate more candidates with a larger beam and to introduce controlled random exploration. The resulting value estimates (e.g., eCPM) are streamed as RL logs to support ranking-aware RSPO updates.

\paragraph{Online Learning module}
The Online Learning Module processes real-time request and interaction streams to construct two distinct training signals: (i) it prioritizes positive engagement samples to generate VSL logs for modeling the user-interest distribution, and (ii) it samples a subset of high-value users—including cases where explicit interactions may be absent—and routes their contexts to the reward system for value estimation. The online learner performs continual mini-batch training for both VSL and RL, pushing updated parameters to the inference server in real time.

\paragraph{Realtime Index module}
The Realtime Index Module replaces the legacy embedding-based retrieval pipeline to enable generative recommendation. In traditional DLRM systems, retrieval relies on embedding indexes that must be frequently rebuilt as models update, making index refresh typically minute-level. In contrast, our SID-based indexing is driven by a relatively stable content-to-UA-SID mapping and does not require frequent global rebuilds. When a new item arrives, we simply compute its UA-SID from content and update a bidirectional index (UA-SID $\leftrightarrow$ Item ID) in seconds. This design substantially improves item freshness and cold-start coverage, while also ensuring training--serving consistency.

\paragraph{Realtime Serving}
Realtime Serving handles user requests by invoking the inference server to return ranked ad lists, while simultaneously logging serving contexts and user feedback to close the loop. To maintain high-throughput, low-latency inference under traffic fluctuations, we apply Dynamic Beam Serving (DBS), result caching, and other concurrency optimizations.


\subsection{Efficiency Optimization}
To reconcile the high computational demands of generative recommendation with strict production latency constraints, we introduce a suite of optimizations.

\subsubsection{Dynamic Beam Serving (DBS)}
We propose Dynamic Beam Serving (DBS) to improve the efficiency--effectiveness trade-off of multi-step decoding in real-time advertising. 

\paragraph{Dynamic Beam Width (DBW)}
The number of returned candidates is determined by the beam width at the \emph{final} step, while the overall decoding cost is dominated by the beam widths in \emph{earlier} steps, since they control the number of hypotheses propagated to subsequent steps. Therefore, a fixed beam width across steps is often suboptimal under a constrained compute budget. Motivated by this, DBS adopts a \emph{progressively increasing} beam schedule across steps, which reduces intermediate computation while preserving the final candidate quality.


\paragraph{Traffic-Aware Adaptive Beam Search (TABS)}
Moreover, request traffic in recommendation systems exhibits strong peak--off-peak cycles, and serving constraints are dictated by peak load. DBS therefore further adjusts the overall beam scale according to instantaneous traffic. Let $Q_t$ denote the serving QPS at time $t$ and $B_{\text{base}}$ be the baseline beam setting. We adapt the active beam scale based on traffic intensity and available computational slack $\mathcal{C}_{\text{avail}}$ by $B_t = B_{\text{base}} \cdot f(Q_t, \mathcal{C}_{\text{avail}})$. During off-peak periods (e.g., $Q_t < Q_{\text{threshold}}$), we increase $B_t$ to leverage otherwise idle compute, enabling broader hypothesis exploration and improving ranking quality, while keeping peak-time latency and throughput within budget.



%
\subsubsection{Reco Result Cache}
A single user may issue multiple ad requests within a short time window, during which both user intent and the candidate ad pool typically remain stable. As a result, inference outcomes are expected to be largely consistent across these requests. By caching previously generated ad recommendations, subsequent requests within a bounded interval (e.g., one minute) can directly reuse cached results, significantly reducing inference resource consumption without degrading serving performance.

\subsubsection{Other Optimizations.}
We propose Beam-Shared KV Caching to organize beams along the sequence dimension. This allows multiple beams to share a single encoder KV cache, eliminating redundant memory accesses and reducing the per‑step KV read complexity from $\mathcal{O}(B \cdot L)$ to $\mathcal{O}(L)$. For beam search, we introduce TopK Pre‑Cut. It first selects $k$ candidates in parallel from each beam of the previous step, then performs a global top‑$k$ selection over the aggregated candidates. This reduces the search space, improves GPU parallelism, and maintains search correctness. Further, we reduce numerical precision from FP32 to FP8~\cite{FP8}, significantly lowering both computational redundancy and memory access overhead.



\begin{table}[t]
\centering
\caption{Overall performance and ablation of GR4AD.}
\vspace{-1em}
\resizebox{0.95\columnwidth}{!}{%
\begin{tabular}{l c c}
    \toprule
    \multirow{2}{*}{\textbf{Model Settings}}                                           & {\textbf{$\Delta$Revenue}}  & \textbf{$\Delta$QPS}     \\
                                                                      & vs. Base                    & vs. GR-Base              \\
                                                               
    \midrule
    \textit{Baselines}                                                                                                        \\
    \ \ \ \ DLRM (Base)                                               & --                          & --                      \\
    \ \ \ \ OneRec-V2~\cite{zhou2025onerecv2} (GR-Base)                                       & +1.68\%                     & --                      \\
    \midrule
    \textit{Tokenization Optimizations}                                                                                \\
    \ \ \ \ + UA-SID                                                     & +1.92\%                     & 0\%                  \\
    \midrule                                                                                         
    \textit{Learning Optimizations}                                                                                \\
    \ \ \ \ + VSL                                                     & +2.80\%                     & -25\%                   \\
    \ \ \ \ + VSL + DPO~\cite{rafailov2023direct}                     & +3.16\%                     & -25\%                   \\
    \ \ \ \ + VSL + GRPO~\cite{shao2024deepseekmath}                  & +3.21\%                     & -25\%                   \\
    \ \ \ \ + VSL + RSPO                                              & +3.86\%                     & -25\%                   \\
    \ \ \ \ + \textbf{U}nified \textbf{V}SL \& \textbf{R}SPO (UVR)    & +4.01\%                     & -25\%                   \\
    \midrule
    \textit{Serving Optimizations}                                                                                                          \\
    \ \ \ \ + UVR + DBS                                               & \textbf{+4.32\% }           & +20\%                   \\
    \ \ \ \ + UVR + DBS + DeepSeek-MTP~\cite{deepseek2024dv3}         & +3.98\%                     & \textbf{+117}\%         \\
    \ \ \ \ \textbf{GR4AD} (+ UVR + DBS + LazyAR)                     & +4.28\%                     & \textbf{+117}\%         \\
    \bottomrule
\end{tabular}
}
\label{tab:ablation_main}
\vspace{-1em}
\end{table}

\section{Experiments}


%
%
\subsection{Overall Performance}
%
As shown in Table~\ref{tab:ablation_main}, GR4AD significantly outperforms these baselines in both inference efficiency and revenue, demonstrating its effectiveness. Specifically, the baselines include previous DLRM-based Kuaishou advertising platform, and OneRec-V2\cite{zhou2025onerecv2}, a state-of-the-art generative recommendation model. Note that GR models are typically served on more powerful GPUs and with higher device utilization, whereas the DLRM-based production stack involves multiple models co-serving, making single-model QPS not directly comparable. Therefore, we report serving efficiency as relative QPS improvements over the GR-Base setting for fair comparison.
Moreover, we conduct several extensive ablation studies to assess the contribution of each component in GR4AD.
\paragraph{Value-Aware Online Learning}
%

%
(1) First, we find that introducing VSL significantly boosts online revenue, demonstrating the effectiveness of the VSL module. Specifically, VSL, incorporating user and behavior weighting as well as eCPM token prediction, better aligns with advertising scenario requirements by enabling differentiated user modeling and directly targeting business objectives to maximize revenue.
(2) However, while VSL learns a fixed data distribution in a point-wise manner, it lacks strong generalization ability and fails to further explore user preferences. To address this limitation, incorporating RSPO aligns generation probabilities with the relative ranking order in a list-wise manner, resulting in the most significant improvement among all optimization components. This demonstrates that RSPO can more comprehensively capture user interest and preferences compared to DPO\cite{rafailov2023direct} or GRPO\cite{shao2024deepseekmath}, leading to enhanced business revenue.
(3) Finally, rather than simply combining VSL and RSPO, we unify them through a sample-level training indicator, effectively leveraging the strengths of both. This approach not only stabilizes training in an online learning setting but also results in additional revenue.
\paragraph{Dynamic Beam Serving}
As shown in Table~\ref{tab:ablation_main}, we implement the DBS mechanism to optimize inference efficiency and maximize revenue. First, DBW allows us to replace a fixed beam width across layers with a dynamic beam width (e.g., replacing 512-512-512 with 128-256-512), which reduces the overall computational load, significantly improves inference efficiency, and does not compromise revenue. Then, with the help of TABS, we increase the beam width by 60\% during off-peak periods to improve revenue, while keeping the beam width unchanged during peak periods. 
The collaboration of both mechanisms leads to an optimized balance between revenue and efficiency.

\paragraph{Lazy Autoregressive Decoder}
We observe that LazyAR results in a marginal performance decrease, yet it nearly doubles qps. 
This highlights the dual benefits of LazyAR: on the one hand, it improves inference efficiency by sharing the majority of decoder layers and enabling parallel computation to reduce the overall computational load, and on the other hand, it introduces an MTP-style auxiliary loss to enrich latent representations, ensuring that performance remains as unaffected as possible. In LazyAR, we configure the total number of decoder layers, L=9, with the first K = 6 layers shared across all beams. This favorable accuracy-efficiency trade-off allows GR4AD to effectively handle Kuaishou's full advertising traffic.
\paragraph{Business Indicators}
Beyond revenue, there is a significant 17.5\% increase in ad delivery for small and medium-sized advertisers, marking a remarkable achievement. Additionally, due to more precise interest modeling, the generative ad system enhanced the user experience, leading to a 10.17\% improvement in ad conversion rates. Among less active users, we still observed a 7.28\% increase in conversion rates. We attribute these gains to improved generalization from content-based SIDs and a more real-time index that better supports cold-start items. Overall, GR4AD supports a healthier platform ecosystem, delivering a win-win-win outcome for the platform, advertisers, and users.
\subsection{Scaling Laws for GR4AD}

To validate the scalability of our approach, we further investigate scaling effects from both model parameters and inference beam.

\subsubsection{Model Scaling.}
We conduct a controlled set of online A/B tests across four model sizes, containing 0.03B, 0.08B, 0.16B, and 0.32B parameters, while keeping the inference beam width fixed at 512. We observe a clear and monotonic improvement in the revenue metric as model size increases. Larger models consistently achieve lower training loss, which is consistent with the observed online performance trend and suggests stronger representational power and generative modeling capacity. Specifically, the revenue lift steadily increases from +2.13\% to +4.43\%, demonstrating that scaling generative recommenders yields substantial and reliable gains in real-world advertising settings. 

    

    
    
\begin{figure}
    \centering
    \includegraphics[width=0.95\linewidth]{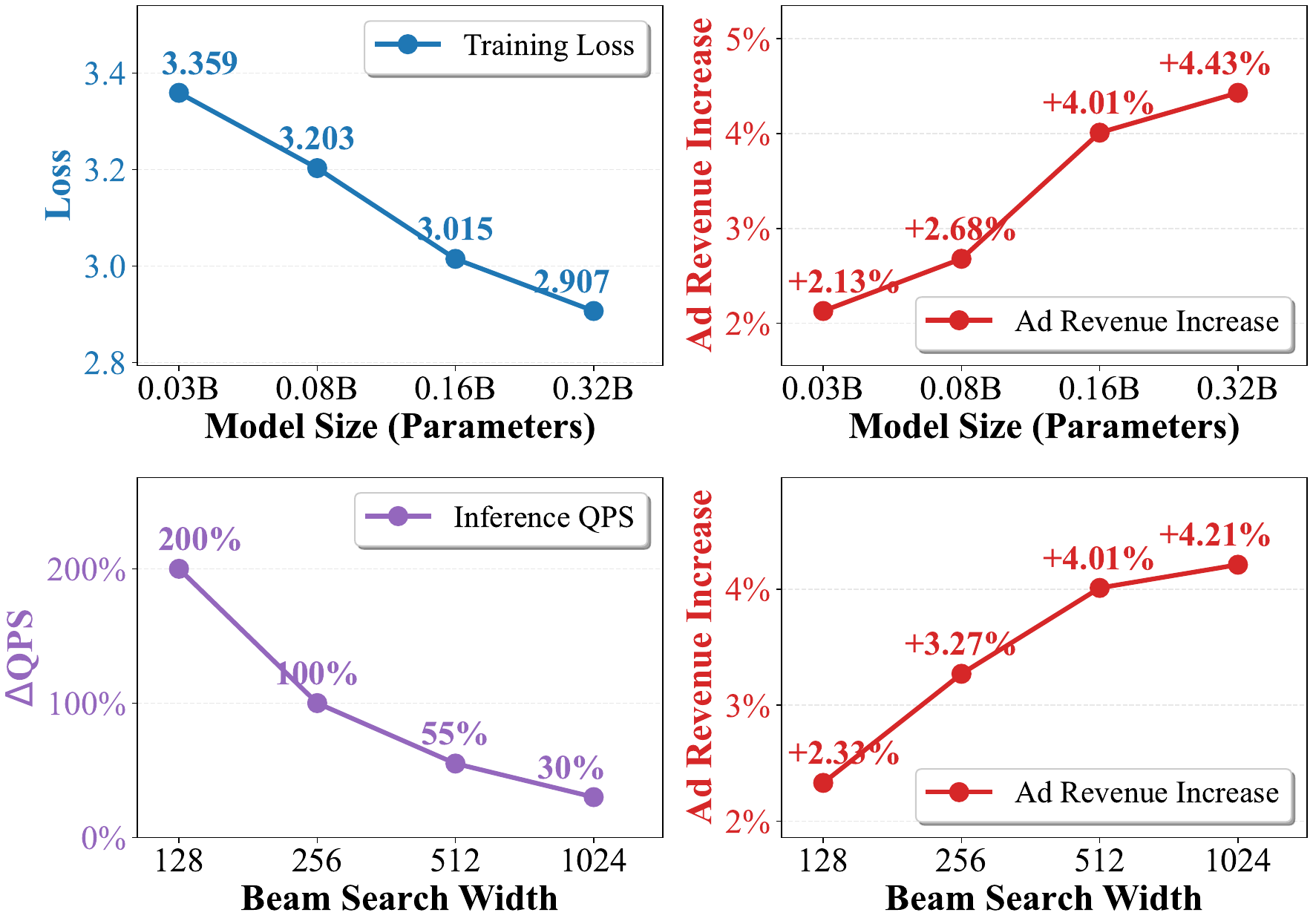}
\vspace{-1em}
    \caption{Scaling laws of model size and beam width.}
    \label{fig:scaling_combined}
\vspace{-1em}
\end{figure}

\subsubsection{Inference Scaling.}
Beyond model size, we further investigate inference-time scaling by increasing the beam search width to enhance candidate exploration during generation. Online results show consistent gains with wider beams at a fixed model size of 0.16B: the revenue lift increases from +2.33\% at beam width 128 to +4.21\% at beam width 1024. These results indicate that stronger inference-time search can further unlock the potential of generative recommenders and translate into meaningful business impact. In production, we select the beam width based on latency and compute budgets to balance online gains against serving cost.

Overall, we observe a clear Scaling Law in both model size and beam width, which provides valuable insights for balancing resource consumption and performance gains.
%

\subsection{Quality of UA-SID}

\paragraph{Embedding Optimizations}
A high-quality underlying embedding should effectively capture both an advertisement's distinctiveness and the relationships among advertisements. Therefore, we constructed an offline test set to evaluate photo-to-photo recall: a retrieval is considered successful if the two photos belong to the same advertised product.
As illustrated in Table\ref{tab:UA-Emb}, QARM\cite{luo2024qarm} achieves the poorest performance because it summarizes only the video content without accounting for advertisement-specific information. Understanding these cues with Qwen3‑VL‑7B~\cite{qwen3vl} yields a substantial recall improvement. Further, after applying instruction tuning (IT) and co‑occurrence learning (CL) to Qwen3‑VL‑7B, we obtain the best recall results, underscoring the importance of end-to-end fine‑tuning of advertisement embedding models. More clustering visualization results are shown in Appendix.

\paragraph{Quantization Optimizations}
Under the same quantization space (i.e., the product of codebook sizes across all levels), we regard a SID scheme as better if it achieves a \emph{lower compression ratio} and a \emph{lower collision rate}, since this indicates higher code utilization and stronger item discriminability while still grouping semantically similar items. We report three metrics, compression ratio (Cpr), collision rate (Col) and codebook utilization (Util). The detailed computation is listed in Appendix.
The ablation results are summarized in Table~\ref{tab:UA-Emb}. We observe that, under a fixed quantization space, allocating larger codebooks to earlier levels and smaller ones to later levels (i.e., a multi-resolution design) improves codebook utilization and reduces collisions. Furthermore, applying a randomized hashing strategy at the final layer (denoted * in table) further lowers the collision rate, and can also facilitate learning collaborative structure among items beyond pure semantic partitioning. 

In summary, optimization of the embeddings and quantization yielded a +0.24\% increase in revenue, as illustrated in Table~\ref{tab:ablation_main}.

\begin{table}[t]
\centering

\caption{Ablation of UA-SID.}
\vspace{-1em}
\label{tab:UA-Emb}
\resizebox{1.0\columnwidth}{!}{%
\begin{tabular}{lccc}
\toprule
    \textbf{Tokenization Settings}     &  \multicolumn{3}{c}{\textbf{Offline Metrics}}   \\ 
\midrule
\textit{Embedding Optimizations} & \textbf{R@1}$\uparrow$ & \textbf{R@5}$\uparrow$ & \textbf{R@10}$\uparrow$  \\[2pt]
 \ \ \ \ QARM~\cite{luo2024qarm}                & 0.541             & 0.812 & 0.893      \\
 \ \ \ \ Qwen3-VL-7B~\cite{qwen3vl}          &0.769       & 0.948              & 0.977       \\
 \ \ \ \ Qwen3-VL-7B + IT + CL (UAE) & \textbf{0.896} & \textbf{0.985} & \textbf{0.994} \\

\midrule
\textit{Quantization Optimizations} & \textbf{Cpr}$\downarrow$ & \textbf{Col}$\downarrow$ & \textbf{Util}$\uparrow$ \\[2pt]

\ \ \ \ RQ-Kmeans ~\cite{luo2024qarm, zhou2025onerec}  & \multirow{2}{*}{3.54}  & \multirow{2}{*}{85.44\%} & \multirow{2}{*}{0.10\textperthousand} \\
\ \ \ \ \ \ \ \ (4096, 4096, 4096) \\
\ \ \ \ RQ-Kmeans + MR & \multirow{2}{*}{1.78} & \multirow{2}{*}{59.72\%} & \multirow{2}{*}{0.20\textperthousand} \\
\ \ \ \ \ \ \ \ (16384, 4096, 1024) \\
\ \ \ \ RQ-Kmeans + MG + MR (UA-SID) & \multirow{2}{*}{\textbf{1.07}} & \multirow{2}{*}{\textbf{18.26\%}}  & \multirow{2}{*}{\textbf{0.34\textperthousand}} \\
\ \ \ \ \ \ \ \ \textbf{(16384, 4096, 1024*)} \\

\bottomrule
\end{tabular}
}
\vspace{-1em}
\end{table}

\section{Conclusions}

In this paper, we presented GR4AD, a production-oriented generative recommender for large-scale, real-time advertising with online learning. GR4AD is co-designed across tokenization, architecture, learning and serving: UA-SID facilitates accurate advertisement representation; LazyAR improves the efficiency of short, multi-candidate generation by relaxing layer-wise dependencies while preserving effectiveness; VSL and the proposed RSPO align continual optimization with business value in non-stationary markets; and dynamic beam serving adapts decoding compute to traffic fluctuations and latency budgets. Large-scale online A/B tests show up to 4.2\% ad revenue improvement over a strong DLRM baseline, with consistent gains from both model scaling and inference-time scaling, while achieving high-throughput real-time serving in a fully deployed system. These results highlight the promise of recommendation-native generative design for advertising and suggest future directions in more robust continual learning, cost-aware inference control, and broader constraint optimization in real-world deployments.

\clearpage
\balance

\section*{Acknowledgments}
We sincerely thank the following individuals (listed in alphabetical order) for their invaluable contributions: Caiyi Xu, Chen Yang, Chen Li, Fuxing Zhang, Haiping Xu, Hongtao Cheng, Jin Ouyang, Jinghui Jia, Jingshan Lv, Kang Sun, Lejian Ren, Qigen Hu, Xiang He, Xin Ku, Xinchen Luo, Yiyu Wang, Yongchuan Wang, Zhan Hu, Zhaojie Liu, Zhongteng Han

\bibliographystyle{ACM-Reference-Format}
\bibliography{reference}

@article{deepseek2024dv3,
  author       = {DeepSeek{-}AI},
  title        = {DeepSeek-V3 Technical Report},
  journal      = {CoRR},
  volume       = {abs/2412.19437},
  year         = {2024}
}

@article{zhang2025policy,
  author       = {Wenhao Zhang and
                  Yuexiang Xie and
                  Yuchang Sun and
                  Yanxi Chen and
                  Guoyin Wang and
                  Yaliang Li and
                  Bolin Ding and
                  Jingren Zhou},
  title        = {On-Policy {RL} Meets Off-Policy Experts: Harmonizing Supervised Fine-Tuning
                  and Reinforcement Learning via Dynamic Weighting},
  journal      = {CoRR},
  volume       = {abs/2508.11408},
  year         = {2025}
}

@article{fu2025srft,
  author       = {Yuqian Fu and
                  Tinghong Chen and
                  Jiajun Chai and
                  Xihuai Wang and
                  Songjun Tu and
                  Guojun Yin and
                  Wei Lin and
                  Qichao Zhang and
                  Yuanheng Zhu and
                  Dongbin Zhao},
  title        = {{SRFT:} {A} Single-Stage Method with Supervised and Reinforcement
                  Fine-Tuning for Reasoning},
  journal      = {CoRR},
  volume       = {abs/2506.19767},
  year         = {2025}
}

@article{zhou2025onerec,
  author       = {Guorui Zhou and
                  Jiaxin Deng and
                  Jinghao Zhang and
                  Kuo Cai and
                  Lejian Ren and
                  Qiang Luo and
                  Qianqian Wang and
                  Qigen Hu and
                  Rui Huang and
                  Shiyao Wang and
                  Weifeng Ding and
                  Wuchao Li and
                  Xinchen Luo and
                  Xingmei Wang and
                  Zexuan Cheng and
                  Zixing Zhang and
                  Bin Zhang and
                  Boxuan Wang and
                  Chaoyi Ma and
                  Chengru Song and
                  Chenhui Wang and
                  Di Wang and
                  Dongxue Meng and
                  Fan Yang and
                  Fangyu Zhang and
                  Feng Jiang and
                  Fuxing Zhang and
                  Gang Wang and
                  Guowang Zhang and
                  Han Li and
                  Hengrui Hu and
                  Hezheng Lin and
                  Hongtao Cheng and
                  Hongyang Cao and
                  Huanjie Wang and
                  Jiaming Huang and
                  Jiapeng Chen and
                  Jiaqiang Liu and
                  Jinghui Jia and
                  Kun Gai and
                  Lantao Hu and
                  Liang Zeng and
                  Liao Yu and
                  Qiang Wang and
                  Qidong Zhou and
                  Shengzhe Wang and
                  Shihui He and
                  Shuang Yang and
                  Shujie Yang and
                  Sui Huang and
                  Tao Wu and
                  Tiantian He and
                  Tingting Gao and
                  Wei Yuan and
                  Xiao Liang and
                  Xiaoxiao Xu and
                  Xugang Liu and
                  Yan Wang and
                  Yi Wang and
                  Yiwu Liu and
                  Yue Song and
                  Yufei Zhang and
                  Yunfan Wu and
                  Yunfeng Zhao and
                  Zhanyu Liu},
  title        = {OneRec Technical Report},
  journal      = {CoRR},
  volume       = {abs/2506.13695},
  year         = {2025}
}

@article{qwen3vl,
  author       = {Shuai Bai and
                  Yuxuan Cai and
                  Ruizhe Chen and
                  Keqin Chen and
                  Xionghui Chen and
                  Zesen Cheng and
                  Lianghao Deng and
                  Wei Ding and
                  Chang Gao and
                  Chunjiang Ge and
                  Wenbin Ge and
                  Zhifang Guo and
                  Qidong Huang and
                  Jie Huang and
                  Fei Huang and
                  Binyuan Hui and
                  Shutong Jiang and
                  Zhaohai Li and
                  Mingsheng Li and
                  Mei Li and
                  Kaixin Li and
                  Zicheng Lin and
                  Junyang Lin and
                  Xuejing Liu and
                  Jiawei Liu and
                  Chenglong Liu and
                  Yang Liu and
                  Dayiheng Liu and
                  Shixuan Liu and
                  Dunjie Lu and
                  Ruilin Luo and
                  Chenxu Lv and
                  Rui Men and
                  Lingchen Meng and
                  Xuancheng Ren and
                  Xingzhang Ren and
                  Sibo Song and
                  Yuchong Sun and
                  Jun Tang and
                  Jianhong Tu and
                  Jianqiang Wan and
                  Peng Wang and
                  Pengfei Wang and
                  Qiuyue Wang and
                  Yuxuan Wang and
                  Tianbao Xie and
                  Yiheng Xu and
                  Haiyang Xu and
                  Jin Xu and
                  Zhibo Yang and
                  Mingkun Yang and
                  Jianxin Yang and
                  An Yang and
                  Bowen Yu and
                  Fei Zhang and
                  Hang Zhang and
                  Xi Zhang and
                  Bo Zheng and
                  Humen Zhong and
                  Jingren Zhou and
                  Fan Zhou and
                  Jing Zhou and
                  Yuanzhi Zhu and
                  Ke Zhu},
  title        = {Qwen3-VL Technical Report},
  journal      = {CoRR},
  volume       = {abs/2511.21631},
  year         = {2025}
}

@inproceedings{JuCNKW0S25,
  author       = {Clark Mingxuan Ju and
                  Liam Collins and
                  Leonardo Neves and
                  Bhuvesh Kumar and
                  Louis Yufeng Wang and
                  Tong Zhao and
                  Neil Shah},
  title        = {Generative Recommendation with Semantic IDs: {A} Practitioner's Handbook},
  booktitle    = {{CIKM}},
  pages        = {6420--6425},
  publisher    = {{ACM}},
  year         = {2025}
}

@article{yang2020large,
  author       = {Xiaoyong Yang and
                  Yadong Zhu and
                  Yi Zhang and
                  Xiaobo Wang and
                  Quan Yuan},
  title        = {Large Scale Product Graph Construction for Recommendation in E-commerce},
  journal      = {CoRR},
  volume       = {abs/2010.05525},
  year         = {2020}
}

@article{zhou2025onerecv2,
  author       = {Guorui Zhou and
                  Hengrui Hu and
                  Hongtao Cheng and
                  Huanjie Wang and
                  Jiaxin Deng and
                  Jinghao Zhang and
                  Kuo Cai and
                  Lejian Ren and
                  Lu Ren and
                  Liao Yu and
                  Pengfei Zheng and
                  Qiang Luo and
                  Qianqian Wang and
                  Qigen Hu and
                  Rui Huang and
                  Ruiming Tang and
                  Shiyao Wang and
                  Shujie Yang and
                  Tao Wu and
                  Wuchao Li and
                  Xinchen Luo and
                  Xingmei Wang and
                  Yi Su and
                  Yunfan Wu and
                  Zexuan Cheng and
                  Zhanyu Liu and
                  Zixing Zhang and
                  Bin Zhang and
                  Boxuan Wang and
                  Chaoyi Ma and
                  Chengru Song and
                  Chenhui Wang and
                  Chenglong Chu and
                  Di Wang and
                  Dongxue Meng and
                  Dunju Zang and
                  Fan Yang and
                  Fangyu Zhang and
                  Feng Jiang and
                  Fuxing Zhang and
                  Gang Wang and
                  Guowang Zhang and
                  Han Li and
                  Honghui Bao and
                  Hongyang Cao and
                  Jiaming Huang and
                  Jiapeng Chen and
                  Jiaqiang Liu and
                  Jinghui Jia and
                  Kun Gai and
                  Lantao Hu and
                  Liang Zeng and
                  Qiang Wang and
                  Qidong Zhou and
                  Rongzhou Zhang and
                  Shengzhe Wang and
                  Shihui He and
                  Shuang Yang and
                  Siyang Mao and
                  Sui Huang and
                  Tiantian He and
                  Tingting Gao and
                  Wei Yuan and
                  Xiao Liang and
                  Xiaoxiao Xu and
                  Xugang Liu and
                  Yan Wang and
                  Yang Zhou and
                  Yi Wang and
                  Yiwu Liu and
                  Yue Song and
                  Yufei Zhang and
                  Yunfeng Zhao and
                  Zhixin Ling and
                  Ziming Li},
  title        = {OneRec-V2 Technical Report},
  journal      = {CoRR},
  volume       = {abs/2508.20900},
  year         = {2025}
}

@inproceedings{zheng2024adapting,
  author       = {Bowen Zheng and
                  Yupeng Hou and
                  Hongyu Lu and
                  Yu Chen and
                  Wayne Xin Zhao and
                  Ming Chen and
                  Ji{-}Rong Wen},
  title        = {Adapting Large Language Models by Integrating Collaborative Semantics
                  for Recommendation},
  booktitle    = {{ICDE}},
  pages        = {1435--1448},
  publisher    = {{IEEE}},
  year         = {2024}
}

@inproceedings{wang2024learnable,
  author       = {Wenjie Wang and
                  Honghui Bao and
                  Xinyu Lin and
                  Jizhi Zhang and
                  Yongqi Li and
                  Fuli Feng and
                  See{-}Kiong Ng and
                  Tat{-}Seng Chua},
  title        = {Learnable Item Tokenization for Generative Recommendation},
  booktitle    = {{CIKM}},
  pages        = {2400--2409},
  publisher    = {{ACM}},
  year         = {2024}
}

@inproceedings{hou2025generating,
  author       = {Yupeng Hou and
                  Jiacheng Li and
                  Ashley Shin and
                  Jinsung Jeon and
                  Abhishek Santhanam and
                  Wei Shao and
                  Kaveh Hassani and
                  Ning Yao and
                  Julian J. McAuley},
  title        = {Generating Long Semantic IDs in Parallel for Recommendation},
  booktitle    = {{KDD} {(2)}},
  pages        = {956--966},
  publisher    = {{ACM}},
  year         = {2025}
}

@article{wang2025nezha,
  author       = {Yejing Wang and
                  Shengyu Zhou and
                  Jinyu Lu and
                  Ziwei Liu and
                  Langming Liu and
                  Maolin Wang and
                  Wenlin Zhang and
                  Feng Li and
                  Wenbo Su and
                  Pengjie Wang and
                  Jian Xu and
                  Xiangyu Zhao},
  title        = {{NEZHA:} {A} Zero-sacrifice and Hyperspeed Decoding Architecture for
                  Generative Recommendations},
  journal      = {CoRR},
  volume       = {abs/2511.18793},
  year         = {2025}
}

@article{xu2025mmq,
  author       = {Yi Xu and
                  Moyu Zhang and
                  Chenxuan Li and
                  Zhihao Liao and
                  Haibo Xing and
                  Hao Deng and
                  Jinxin Hu and
                  Yu Zhang and
                  Xiaoyi Zeng and
                  Jing Zhang},
  title        = {{MMQ:} Multimodal Mixture-of-Quantization Tokenization for Semantic
                  {ID} Generation and User Behavioral Adaptation},
  journal      = {CoRR},
  volume       = {abs/2508.15281},
  year         = {2025}
}

@article{zhou2025openonerec,
  author       = {Guorui Zhou and
                  Shiyao Wang and
                  Shucheng Li and
                  Jiaxing Song and
                  Yujie Lu and
                  Weijieying Ren and
                  Hao Liu and
                  Bo Chen and
                  Mingzhou Zhou and
                  Yu Wang and
                  Yiyan Zhang and
                  Tianshu Wu and
                  Jinze Bai and
                  Xiang Li and
                  Xiangyu Zhao and
                  Xiuqiang He and
                  Zhenhua Dong and
                  Jieming Zhu and
                  Ruiming Tang and
                  Rui Zhang and
                  Xi Xiao and
                  Jun Xu and
                  Zhengyan Zhang and
                  Xuemin Zhao and
                  Shuo Li and
                  Zhiyuan Zhang and
                  Ji-Rong Wen and
                  Weinan Zhang and
                  Tingting Zhang and
                  Jun Wang and
                  Xinxuan Chen and
                  Xin Zhao and
                  Lin Liu and
                  Yifan Liu and
                  Ruihua Song and
                  Jian-Yun Nie and
                  Hongning Wang and
                  Dawei Yin and
                  Hengshu Zhu and
                  Hui Xiong and
                  Depeng Jin and
                  Yong Li and
                  Wenwu Ou and
                  Jian Pei and
                  Bin Cui and
                  Ping Li},
  title        = {OpenOneRec Technical Report},
  journal      = {arXiv preprint arXiv:2512.24762},
  year         = {2025}
}

@inproceedings{hstu2024,
  author       = {Jiaqi Zhai and
                  Lucy Liao and
                  Xing Liu and
                  Yueming Wang and
                  Rui Li and
                  Xuan Cao and
                  Leon Gao and
                  Zhaojie Gong and
                  Fangda Gu and
                  Jiayuan He and
                  Yinghai Lu and
                  Yu Shi},
  title        = {Actions Speak Louder than Words: Trillion-Parameter Sequential Transducers
                  for Generative Recommendations},
  booktitle    = {{ICML}},
  publisher    = {OpenReview.net},
  year         = {2024}
}

@article{liu2026gdpo,
  author    = {Shih-Yang Liu and
               Ta-Chi Yen and
               Cheng-Yu Hsieh and
               Yueh-Ning Chen and
               Chin-Hsuan Lin and
               Yu-Wei Chao and
               Tsu-Jui Fu and
               Shou-De Lin and
               Wan-Ting Hsu and
               Hsiang-Sheng Chiu and
               Pei-Fu Guo and
               Chen-Hsiang Yu and
               Wen-Hsuan Li},
  title     = {GDPO: Group reward-Decoupled Normalization Policy Optimization for Multi-reward RL Optimization},
  journal={arXiv preprint arXiv:2601.05242},
  year         = {2026}
}

@article{chen2025onesearch,
  author       = {Ben Chen and
                  Xian Guo and
                  Siyuan Wang and
                  Zihan Liang and
                  Yue Lv and
                  Yufei Ma and
                  Xinlong Xiao and
                  Bowen Xue and
                  Xuxin Zhang and
                  Ying Yang and
                  Huangyu Dai and
                  Xing Xu and
                  Tong Zhao and
                  Mingcan Peng and
                  Xiaoyang Zheng and
                  Chao Wang and
                  Qihang Zhao and
                  Zhixin Zhai and
                  Yang Zhao and
                  Bochao Liu and
                  Jingshan Lv and
                  Xiao Liang and
                  Yuqing Ding and
                  Jing Chen and
                  Chenyi Lei and
                  Wenwu Ou and
                  Han Li and
                  Kun Gai},
  title        = {OneSearch: {A} Preliminary Exploration of the Unified End-to-End Generative
                  Framework for E-commerce Search},
  journal      = {CoRR},
  volume       = {abs/2509.03236},
  year         = {2025}
}

@article{zhang2025gpr,
  author       = {Jun Zhang and
                  Yi Li and
                  Yue Liu and
                  Changping Wang and
                  Yuan Wang and
                  Yuling Xiong and
                  Xun Liu and
                  Haiyang Wu and
                  Qian Li and
                  Enming Zhang and
                  Jiawei Sun and
                  Xin Xu and
                  Zishuai Zhang and
                  Ruoran Liu and
                  Suyuan Huang and
                  Zhaoxin Zhang and
                  Zhengkai Guo and
                  Shuojin Yang and
                  Meng{-}Hao Guo and
                  Huan Yu and
                  Jie Jiang and
                  Shi{-}Min Hu},
  title        = {{GPR:} Towards a Generative Pre-trained One-Model Paradigm for Large-Scale
                  Advertising Recommendation},
  journal      = {CoRR},
  volume       = {abs/2511.10138},
  year         = {2025}
}

@inproceedings{MTGR2025,
  author       = {Ruidong Han and
                  Bin Yin and
                  Shangyu Chen and
                  He Jiang and
                  Fei Jiang and
                  Xiang Li and
                  Chi Ma and
                  Mincong Huang and
                  Xiaoguang Li and
                  Chunzhen Jing and
                  Yueming Han and
                  MengLei Zhou and
                  Lei Yu and
                  Chuan Liu and
                  Wei Lin},
  title        = {{MTGR:} Industrial-Scale Generative Recommendation Framework in Meituan},
  booktitle    = {{CIKM}},
  pages        = {5731--5738},
  publisher    = {{ACM}},
  year         = {2025}
}

@inproceedings{lambdaloss,
  author       = {Xuanhui Wang and
                  Cheng Li and
                  Nadav Golbandi and
                  Michael Bendersky and
                  Marc Najork},
  title        = {The LambdaLoss Framework for Ranking Metric Optimization},
  booktitle    = {{CIKM}},
  pages        = {1313--1322},
  publisher    = {{ACM}},
  year         = {2018}
}

@inproceedings{scaling2024seqrec,
  author       = {Gaowei Zhang and
                  Yupeng Hou and
                  Hongyu Lu and
                  Yu Chen and
                  Wayne Xin Zhao and
                  Ji{-}Rong Wen},
  title        = {Scaling Law of Large Sequential Recommendation Models},
  booktitle    = {RecSys},
  pages        = {444--453},
  publisher    = {{ACM}},
  year         = {2024}
}

@inproceedings{ouyang2022training,
  author       = {Long Ouyang and
                  Jeffrey Wu and
                  Xu Jiang and
                  Diogo Almeida and
                  Carroll L. Wainwright and
                  Pamela Mishkin and
                  Chong Zhang and
                  Sandhini Agarwal and
                  Katarina Slama and
                  Alex Ray and
                  John Schulman and
                  Jacob Hilton and
                  Fraser Kelton and
                  Luke Miller and
                  Maddie Simens and
                  Amanda Askell and
                  Peter Welinder and
                  Paul F. Christiano and
                  Jan Leike and
                  Ryan Lowe},
  title        = {Training language models to follow instructions with human feedback},
  booktitle    = {NeurIPS},
  year         = {2022}
}

@inproceedings{rafailov2023direct,
  author       = {Rafael Rafailov and
                  Archit Sharma and
                  Eric Mitchell and
                  Christopher D. Manning and
                  Stefano Ermon and
                  Chelsea Finn},
  title        = {Direct Preference Optimization: Your Language Model is Secretly a
                  Reward Model},
  booktitle    = {NeurIPS},
  year         = {2023}
}

@inproceedings{meng2024simpo,
  author       = {Yu Meng and
                  Mengzhou Xia and
                  Danqi Chen},
  title        = {SimPO: Simple Preference Optimization with a Reference-Free Reward},
  booktitle    = {NeurIPS},
  year         = {2024}
}

@article{shao2024deepseekmath,
  author       = {Zhihong Shao and
                  Peiyi Wang and
                  Qihao Zhu and
                  Runxin Xu and
                  Junxiao Song and
                  Mingchuan Zhang and
                  Y. K. Li and
                  Y. Wu and
                  Daya Guo},
  title        = {DeepSeekMath: Pushing the Limits of Mathematical Reasoning in Open
                  Language Models},
  journal      = {CoRR},
  volume       = {abs/2402.03300},
  year         = {2024}
}

@article{gao2025soft,
  author       = {Chang Gao and
                  Chujie Zheng and
                  Xionghui Chen and
                  Kai Dang and
                  Shixuan Liu and
                  Bowen Yu and
                  An Yang and
                  Shuai Bai and
                  Jingren Zhou and
                  Junyang Lin},
  title        = {Soft Adaptive Policy Optimization},
  journal      = {CoRR},
  volume       = {abs/2511.20347},
  year         = {2025}
}

@article{HPT,
  author       = {Xingtai Lv and
                  Yuxin Zuo and
                  Youbang Sun and
                  Hongyi Liu and
                  Yuntian Wei and
                  Zhekai Chen and
                  Lixuan He and
                  Xuekai Zhu and
                  Kaiyan Zhang and
                  Bingning Wang and
                  Ning Ding and
                  Bowen Zhou},
  title        = {Towards a Unified View of Large Language Model Post-Training},
  journal      = {CoRR},
  volume       = {abs/2509.04419},
  year         = {2025}
}

@inproceedings{rajput2023tiger,
  author       = {Shashank Rajput and
                  Nikhil Mehta and
                  Anima Singh and
                  Raghunandan Hulikal Keshavan and
                  Trung Vu and
                  Lukasz Heldt and
                  Lichan Hong and
                  Yi Tay and
                  Vinh Q. Tran and
                  Jonah Samost and
                  Maciej Kula and
                  Ed H. Chi and
                  Mahesh Sathiamoorthy},
  title        = {Recommender Systems with Generative Retrieval},
  booktitle    = {NeurIPS},
  year         = {2023}
}

@inproceedings{dm2020,
  author       = {Jonathan Ho and
                  Ajay Jain and
                  Pieter Abbeel},
  title        = {Denoising Diffusion Probabilistic Models},
  booktitle    = {NeurIPS},
  year         = {2020}
}

@article{gibbs,
  author       = {Gelfand and
                  Alan E},
  title        = {Gibbs sampling},
  journal      = {Journal of the American statistical Association},
  volume       = {95},
  number       = {452},
  pages        = {1300--1304},
  year         = {2000},
  publisher    = {Taylor \& Francis}
}

@article{gpt,
  author       = {Radford Alec and
                  Narasimhan Karthik and
                  Salimans Tim and
                  Sutskever Ilya},
  title        = {Improving language understanding by generative pre-training},
  year         = {2018},
  publisher    = {San Francisco, CA, USA}
}

@article{vae2019,
  author       = {Diederik P. Kingma and
                  Max Welling},
  title        = {An Introduction to Variational Autoencoders},
  journal      = {Found. Trends Mach. Learn.},
  volume       = {12},
  number       = {4},
  pages        = {307--392},
  year         = {2019}
}

@inproceedings{gan,
  author       = {Ian J. Goodfellow and
                  Jean Pouget{-}Abadie and
                  Mehdi Mirza and
                  Bing Xu and
                  David Warde{-}Farley and
                  Sherjil Ozair and
                  Aaron C. Courville and
                  Yoshua Bengio},
  title        = {Generative Adversarial Nets},
  booktitle    = {{NIPS}},
  pages        = {2672--2680},
  year         = {2014}
}

@inproceedings{luo2024qarm,
  author       = {Xinchen Luo and
                  Jiangxia Cao and
                  Tianyu Sun and
                  Jinkai Yu and
                  Rui Huang and
                  Wei Yuan and
                  Hezheng Lin and
                  Yichen Zheng and
                  Shiyao Wang and
                  Qigen Hu and
                  Changqing Qiu and
                  Jiaqi Zhang and
                  Xu Zhang and
                  Zhiheng Yan and
                  Jingming Zhang and
                  Simin Zhang and
                  Mingxing Wen and
                  Zhaojie Liu and
                  Guorui Zhou},
  title        = {{QARM:} Quantitative Alignment Multi-Modal Recommendation at Kuaishou},
  booktitle    = {{CIKM}},
  pages        = {5915--5922},
  publisher    = {{ACM}},
  year         = {2025}
}

@inproceedings{wang2024content,
  author       = {Yidan Wang and
                  Zhaochun Ren and
                  Weiwei Sun and
                  Jiyuan Yang and
                  Zhixiang Liang and
                  Xin Chen and
                  Ruobing Xie and
                  Su Yan and
                  Xu Zhang and
                  Pengjie Ren and
                  Zhumin Chen and
                  Xin Xin},
  title        = {Content-Based Collaborative Generation for Recommender Systems},
  booktitle    = {{CIKM}},
  pages        = {2420--2430},
  publisher    = {{ACM}},
  year         = {2024}
}

@inproceedings{si2024generative,
  author       = {Zihua Si and
                  Zhongxiang Sun and
                  Jiale Chen and
                  Guozhang Chen and
                  Xiaoxue Zang and
                  Kai Zheng and
                  Yang Song and
                  Xiao Zhang and
                  Jun Xu and
                  Kun Gai},
  title        = {Generative Retrieval with Semantic Tree-Structured Identifiers and
                  Contrastive Learning},
  booktitle    = {{SIGIR-AP}},
  pages        = {154--163},
  publisher    = {{ACM}},
  year         = {2024}
}

@article{qu2024tokenrec,
  author       = {Haohao Qu and
                  Wenqi Fan and
                  Zihuai Zhao and
                  Qing Li},
  title        = {TokenRec: Learning to Tokenize {ID} for LLM-Based Generative Recommendations},
  journal      = {{IEEE} Trans. Knowl. Data Eng.},
  volume       = {37},
  number       = {10},
  pages        = {6216--6231},
  year         = {2025}
}

@article{fu2025forge,
  author       = {Kairui Fu and
                  Tao Zhang and
                  Shuwen Xiao and
                  Ziyang Wang and
                  Xinming Zhang and
                  Chenchi Zhang and
                  Yuliang Yan and
                  Junjun Zheng and
                  Yu Li and
                  Zhihong Chen and
                  Jian Wu and
                  Xiangheng Kong and
                  Shengyu Zhang and
                  Kun Kuang and
                  Yu{-}Gang Jiang and
                  Bo Zheng},
  title        = {{FORGE:} Forming Semantic Identifiers for Generative Retrieval in
                  Industrial Datasets},
  journal      = {CoRR},
  volume       = {abs/2509.20904},
  year         = {2025}
}

@inproceedings{zhu2024cost,
  author       = {Jieming Zhu and
                  Mengqun Jin and
                  Qijiong Liu and
                  Zexuan Qiu and
                  Zhenhua Dong and
                  Xiu Li},
  title        = {CoST: Contrastive Quantization based Semantic Tokenization for Generative
                  Recommendation},
  booktitle    = {RecSys},
  pages        = {969--974},
  publisher    = {{ACM}},
  year         = {2024}
}

@article{he2025plum,
  author       = {Ruining He and
                  Lukasz Heldt and
                  Lichan Hong and
                  Raghunandan H. Keshavan and
                  Shifan Mao and
                  Nikhil Mehta and
                  Zhengyang Su and
                  Alicia Tsai and
                  Yueqi Wang and
                  Shao{-}Chuan Wang and
                  Xinyang Yi and
                  Lexi Baugher and
                  Baykal Cakici and
                  Ed H. Chi and
                  Cristos Goodrow and
                  Ningren Han and
                  He Ma and
                  R{\'{o}}mer Rosales and
                  Abby Van Soest and
                  Devansh Tandon and
                  Su{-}Lin Wu and
                  Weilong Yang and
                  Yilin Zheng},
  title        = {{PLUM:} Adapting Pre-trained Language Models for Industrial-scale
                  Generative Recommendations},
  journal      = {CoRR},
  volume       = {abs/2510.07784},
  year         = {2025}
}

@inproceedings{SDPO,
  author       = {Yuxin Chen and
                  Junfei Tan and
                  An Zhang and
                  Zhengyi Yang and
                  Leheng Sheng and
                  Enzhi Zhang and
                  Xiang Wang and
                  Tat{-}Seng Chua},
  title        = {On Softmax Direct Preference Optimization for Recommendation},
  booktitle    = {NeurIPS},
  year         = {2024}
}

@inproceedings{FP8,
  author       = {Maxim Fishman and
                  Brian Chmiel and
                  Ron Banner and
                  Daniel Soudry},
  title        = {Scaling {FP8} training to trillion-token LLMs},
  booktitle    = {{ICLR}},
  publisher    = {OpenReview.net},
  year         = {2025}
}
\clearpage

\appendix

\section{Appendix}
\subsection{Connection Between RSPO and NDCGcost}
\label{app:rspo}


The definition of NDCG provided in the paper is restated as follows: 
\begin{align}
\mathrm{NDCG} = \frac{1}{Z} \sum_{i=1}^{n} \frac{2^{v_i} - 1}{\log_{2}(1 + i)}
= \sum_{i=1}^{n} \frac{G_i}{D_i}
\end{align}

where $Z$ is the ideal DCG, $v_i$ denotes the associated reward (eCPM) for sample $i$, $G_i = \frac{2^{v_i} - 1}{Z}$, $D_i = \log_{2}(1 + i)$. For any sample pair \( (\mathbf{y}_i, \mathbf{y}_j) \) with $eCPM_i > eCPM_j$. In RSPO, let the ranking score based on NDCG be denoted as: 

\begin{equation}
\left\{
\frac{\log p_\theta(\mathbf{y}_i \mid X)}{\log p_{ref}(\mathbf{y}_i \mid X)^{C_{ij}}},
\frac{\log p_\theta(\mathbf{y}_j \mid X)}{\log p_{ref}(\mathbf{y}_j \mid X)^{C_{ij}}}
\right\} = \left\{ g_i, g_j\right\}
\end{equation}

Following the LambdaLoss\cite{lambdaloss} framework, NDCGcost is defined as follows:
\begin{equation}
\begin{aligned}
\mathrm{NDCGcost} 
&= \sum_{i=1}^{n} G_i - \sum_{i=1}^{n} \frac{G_i}{D_i} \\
&= \sum_{i=1}^{n} G_i \sum_{j=1}^{n} \left| \frac{1}{D_{|i-j|}} - \frac{1}{D_{|i-j|+1}} \right| \mathbb{I}_{g_i < g_j} \\
&= \sum_{i=1}^{n} \sum_{\mathbf{y}_j \in \mathcal{E}_i} \left| \frac{1}{D_{|i-j|}} - \frac{1}{D_{|i-j|+1}} \right| |G_i - G_j| 
    \mathbb{I}_{g_i < g_j} + Const \\
&= \sum_{i=1}^{n} \sum_{\mathbf{y}_j \in \mathcal{E}_i} \mathcal{M}_{ij} \mathbb{I}_{g_i < g_j} + Const
\end{aligned}
\end{equation}

Where $\mathcal{M}_{ij} = \left| \frac{1}{D_{|i-j|}} - \frac{1}{D_{|i-j|+1}} \right| |G_i - G_j|$. $\mathbb{I}_{g_i < g_j}$ is the zero-one indicator. $Const$ is a constant, which is only related to the current ranking. $\mathcal{E}_i$ denotes the set of negative samples corresponding to sample $i$, where $v_j<v_i$. 

\begin{theorem}
Ignoring constant terms, $\mathcal{L}_\text{RSPO}$ optimizes an upper bound of $\mathrm{NDCGcost}$.
\begin{equation}
\begin{aligned}
    \mathrm{NDCGcost} = \sum_{i=1}^{n} \sum_{\mathbf{y}_j \in \mathcal{E}_i} \mathbb{I}_{g_i < g_j} \mathcal{M}_{ij} 
    \le 
    \sum_{i=1}^{n} \sum_{\mathbf{y}_j \in \mathcal{E}_i} -\log_{2} \sigma\left( g_i - g_j \right) \mathcal{M}_{ij} \\
    \le 
    \sum_{i=1}^{n}-\log_{2} \sigma \Bigg( -\log \sum_{\mathbf{y}_j \in \mathcal{E}_i} \mathcal{M}_{ij} \exp(g_j - g_i)\Bigg) = \mathcal{L}_\text{RSPO}
\end{aligned}
\end{equation}

\label{thm:rspo}
\end{theorem}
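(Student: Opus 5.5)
The plan is to prove the two inequalities in the chain separately, one pair $(i,j)$ or one anchor $i$ at a time. I will absorb $\beta$ into the scores by writing $g_i$ for $\beta\log\bigl(p_\theta(\mathbf{y}_i\mid X)/p_{ref}(\mathbf{y}_i\mid X)^{C_{ij}}\bigr)$; this does not change the induced ranking, since $\beta>0$. I will also identify the expectation in $\mathcal{L}_\text{RSPO}$ with the sum over anchors $i$ of a single list. Throughout, the key fact is that $\mathcal{M}_{ij}\ge 0$, and this is immediate from its definition as a product of absolute values.

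\textbf{First inequality (pairwise surrogate).} Fix $i$ and $\mathbf{y}_j\in\mathcal{E}_i$. I will show
\[
\mathbb{I}_{g_i<g_j}\le -\log_2\sigma(g_i-g_j).
\]
If $g_i<g_j$, then $\sigma(g_i-g_j)<1/2$, so the right-hand side exceeds $1$. Otherwise the right-hand side is still nonnegative, because $\sigma\in(0,1)$. Multiplying by $\mathcal{M}_{ij}\ge 0$ and summing over $j\in\mathcal{E}_i$ and over $i$ gives the first inequality.

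\textbf{Second inequality (softmax aggregation).} I will first rewrite both sides with $\phi(x)=\log_2(1+x)$. Since $-\log_2\sigma(t)=\log_2(1+e^{-t})$, and since $\sigma(-\log S)=1/(1+S)$ when $\log$ is the natural logarithm, the claim for a fixed $i$ becomes
\[
\sum_{j\in\mathcal{E}_i}\mathcal{M}_{ij}\,\phi(x_j)\;\le\;\phi\Bigl(\sum_{j\in\mathcal{E}_i}\mathcal{M}_{ij}x_j\Bigr),\qquad x_j=e^{g_j-g_i}\ge 0 .
\]
The function $\phi$ is concave on $[0,\infty)$ with $\phi(0)=0$. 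Suppose $\mu_i:=\sum_{j\in\mathcal{E}_i}\mathcal{M}_{ij}\le 1$. Then I apply Jensen's inequality with weights $\mathcal{M}_{ij}$ on the points $x_j$ and the leftover weight $1-\mu_i$ on the point $0$:
\[
\sum_j\mathcal{M}_{ij}\phi(x_j)=\sum_j\mathcal{M}_{ij}\phi(x_j)+(1-\mu_i)\phi(0)\le\phi\Bigl(\sum_j\mathcal{M}_{ij}x_j\Bigr).
\]
Summing over $i$ then yields the second inequality.

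\textbf{Main obstacle: the normalization $\mu_i\le 1$.} This is the step I expect to need the most care, and it is where the specific choices of $Z$ and $D$ are used. I will index positions by the reward (ideal) order, so that $\mathbf{y}_j\in\mathcal{E}_i$ implies $j>i$; ties in $v$ are excluded from $\mathcal{E}_i$ by its strict definition. The discount factors then form a one-sided telescoping sum:
\[
\sum_{j>i}\Bigl|\tfrac{1}{D_{j-i}}-\tfrac{1}{D_{j-i+1}}\Bigr|\le \tfrac{1}{D_1}=1,
\]
because $1/D_k$ is decreasing in $k$.

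For the gain factor, $Z$ is the ideal DCG and all of its terms are nonnegative. Hence $Z$ is at least its first term, $(2^{v_{\max}}-1)/D_1=2^{v_{\max}}-1$. This gives $0\le G_k\le 1$ for every $k$, and therefore $|G_i-G_j|\le 1$. Combining the two bounds gives $\mu_i\le 1$.

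If instead $j$ were allowed on both sides of $i$, this telescoping bound would only give $2$. In that case I would fall back to rescaling $\mathcal{M}$ by a constant, which the statement's clause ``ignoring constant terms'' permits.
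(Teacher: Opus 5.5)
Your proof follows the paper's proof. You use the LambdaLoss bound $\mathbb{I}_{g_i<g_j}\le -\log_2\sigma(g_i-g_j)$ for the first inequality. You then rewrite both per-anchor terms through $\log_2(1+x)$, bound $\sum_{\mathbf{y}_j\in\mathcal{E}_i}\mathcal{M}_{ij}\le 1$ by telescoping the discount differences, and apply Jensen with the leftover weight on the dummy point $x_0=0$. Under your reward-order indexing every step is correct. You are also more careful than the paper in two places. The paper drops $|G_i-G_j|$ without comment, and your $Z\ge 2^{v_{\max}}-1$ argument justifies this. The paper also assumes, without saying so, the one-sidedness you state explicitly, when it counts each distance $k\in[b_1,b_2]$ only once.

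The one thing to correct is your fallback for the two-sided case. Rescaling $\mathcal{M}$ is not ``ignoring a constant'', because $\mathcal{M}$ sits inside the logarithm in $\mathcal{L}_\text{RSPO}$. Once the weights sum to some $\mu>1$, the middle inequality can genuinely fail: with all $x_j=1$ the two sides are $\mu$ and $\log_2(1+\mu)<\mu$.

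This case is not hypothetical. The LambdaLoss identity for $\mathrm{NDCGcost}$ stated just before the theorem (the first equality in the chain) only holds when positions are ranks under $g$. Your reward-order convention proves the two inequalities but does not match that indexing. Under the $g$-ranking, negatives can sit on both sides of $\mathbf{y}_i$. For example, take $n=7$, $\mathbf{y}_i$ at rank $4$, $G_i=1$, all other $G_j=0$, and nearly equal scores. Then $\sum_j\mathcal{M}_{ij}=2(1-1/D_4)\approx 1.14$, and the middle term exceeds the RSPO term.

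Under that convention, bypass the middle term. Only the $\mathbf{y}_j$ with $g_j>g_i$ contribute to the left side, and they all lie on one side of $\mathbf{y}_i$, so their weights sum to some $a\le 1$. Since $e^{g_j-g_i}\ge 1$ for these $j$, you get $a\le\log_2(1+a)\le\log_2\bigl(1+\sum_{\mathbf{y}_j\in\mathcal{E}_i}\mathcal{M}_{ij}e^{g_j-g_i}\bigr)$. The right-hand side is exactly the RSPO term for anchor $i$.
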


\begin{proof}
For the proof of the first inequality, we directly follow the inequality proposed in Lambdaloss, that is $\mathbb{I}_{g_i < g_j} \leq -\log_{2} \sigma\left( g_i - g_j  \right)$. For the proof of the second inequality, we prove the inequality for each sample $i$ individually. Let $\ell_i^{(1)}$ and $\ell_i^{(2)}$ denote the inner terms for sample $i$, respectively.
Using the identity $-\log_{2} \sigma(x) = \log_{2}(1 + e^{-x})$, we rewrite $\ell_i^{(1)}$ and $\ell_i^{(2)}$:
\begin{equation}
    \ell_i^{(1)} = \sum_{\mathbf{y}_j \in \mathcal{E}_i} \mathcal{M}_{ij} \log_{2}(1 + e^{-(g_i - g_j)}) = \sum_{\mathbf{y}_j \in \mathcal{E}_i} \mathcal{M}_{ij} \log_{2}(1 + e^{g_j - g_i})
\end{equation}

\begin{equation}
    \ell_i^{(2)} = -\log_{2} \sigma(-\log \sum_{\mathbf{y}_j \in \mathcal{E}_i} \mathcal{M}_{ij} e^{g_j - g_i}) = \log_{2} \left( 1 + \sum_{\mathbf{y}_j \in \mathcal{E}_i} \mathcal{M}_{ij} e^{g_j - g_i} \right)
\end{equation}

To prove ~\ref{thm:rspo}, it is sufficient to demonstrate that $\ell_i^{(1)} \le \ell_i^{(2)}$. According to the definition of $\mathcal{M}_{ij}$, we have:
\begin{align}
    \sum_{\mathbf{y}_j \in \mathcal{E}_j} \mathcal{M}_{ij}
    &\le \sum_{\mathbf{y}_j \in \mathcal{E}_j}
        \left| \frac{1}{D_{|i-j|}} - \frac{1}{D_{|i-j|+1}} \right| \notag \\
    &= \sum_{k \in [b_1,b_2]}
        \left| \frac{1}{D_{k}} - \frac{1}{D_{k+1}} \right| = \left| \frac{1}{D_{b_1}} - \frac{1}{D_{b_2}} \right| \notag \\
    &= \left| \frac{1}{\log_{2}(1 + b_{1})}
        - \frac{1}{\log_{2}(1 + b_{2})} \right|  < 1
\end{align}

Let $k$ be the distance between $i$ and $j$, such that $k \in [b_1,b_2]$ with $b_{1},b_{2} \ge 1$. Introduce a dummy variable to normalize the weights. Let us define a complementary weight $\mathcal{M}_{i0}$ such that:
\begin{equation}
    \mathcal{M}_{i0} =  1 - \sum_{\mathbf{y}_j \in \mathcal{E}_i} \mathcal{M}_{ij}.
\end{equation} 

Let $x_j = e^{g_j - g_i}$. Define the function $f(x) = \log_{2}(1+x)$ for $x \ge 0$. Its second derivative is $f''(x) = -\frac{1}{\ln 2 (1 + x)^2} < 0$, which implies that $f(x)$ is strictly \textbf{concave}. We apply \textbf{Jensen's Inequality} (where $\mathbb{E}[f(X)] \le f(\mathbb{E}[X])$) to the set of values $\{x_j\}_{\mathbf{y}_j \in \mathcal{E}_i} \cup \{x_0\}$, where we choose $x_0 = 0$. The inequality becomes:
\begin{align}
    \mathcal{M}_{i0} \log_{2}(1 + x_0) + \sum_{\mathbf{y}_j \in \mathcal{E}_i} \mathcal{M}_{ij} \log_{2}(1 + x_j) \le \notag \\
    \log_{2} \left( 1 + \left( \mathcal{M}_{i0} x_0 + \sum_{\mathbf{y}_j \in \mathcal{E}_i} \mathcal{M}_{ij} x_j \right) \right)
\end{align}

Which is exactly:
\begin{equation}
    \sum_{\mathbf{y}_j \in \mathcal{E}_i} \mathcal{M}_{ij} \log_{2}(1 + x_j) \le \log_{2} \left( 1 + \sum_{\mathbf{y}_j \in \mathcal{E}_i} \mathcal{M}_{ij} x_j \right)
\end{equation}

This completes the proof.

\end{proof}

\subsection{Efficiency Optimizations}
Table~\ref{tab:efficiency} presents the key practical optimization points and their corresponding improvements in QPS.

\begin{table}[htbp]
  \centering
  \caption{Practical Inference Efficiency Optimizations}
  \label{tab:efficiency}
  \begin{tabular}{l r}
    \toprule
    Key Optimization (vs. GR4AD) & $\Delta$ QPS \\
    \midrule
    Beam-shared KV Cache     & +212.5\%  \\
    TopK Pre-Cut             & +184.8\%  \\
    Low-Precision Inference  & +50.3\%   \\
    Reco Result Cache        & +27.8\%   \\
    \bottomrule
  \end{tabular}
\end{table}

\paragraph{KV Cache} 

The KV Cache is a widely recognized technique used in the inference process of LLMs. During next-token prediction, it retains the keys and values of self/cross-attention from previous steps, significantly reducing redundant computation and memory access. This optimization is lossless with respect to model outputs and delivers a $2.5\times$ throughput improvement over a non-cached baseline.
Building on KV caching, our proposed Beam-shared KV Caching further increases queries per second (QPS) by an additional $25\%$, yielding an overall inference speedup of more than $3\times$ compared with the non-cached baseline.
\paragraph{TopK Pre-Cut}:
During beam search inference, at step $i$ we select $b_i$ candidates from $b_{i-1} \times |V_i|$ possibilities. Instead of a direct global selection, we first select $b_i$ candidates from $|V_i|$ in parallel for each of the $b_{i-1}$ beams, then choose the final $b_i$ from the resulting $b_{i-1} \times b_i$ candidates. Since $|V_i| \gg b_i$ in practice, this increases GPU parallelism for top-$k$ and reduces comparisons, yielding significant speedups without changing the results.
\paragraph{Low-Precision Inference}
During inference, we adopt FP8 precision~\cite{FP8} to replace FP32 computation, introducing negligible loss in output quality. A/B testing shows a marginal revenue change of approximately -0.1\%, while inference throughput improves by 50.3\%.
\paragraph{Reco Result Cache}
We implement the result caching mechanism and serve 27.8\% of requests directly from the cache within the defined time window (one minute), thereby improving the overall performance of the inference service.

\paragraph{Dynamic Beam Width}
Our investigation into the effect of beam width across layers reveals that its influence on prediction performance escalates progressively. The model exhibits the highest prediction reliability at the initial layers, with later layers requiring wider beams for refined decision-making (Table~\ref{tab:DBW}). This insight directly motivates the design of our Dynamic Beam Width (DBW) mechanism, which allocates computational resources more efficiently across the model depth.

\begin{table}[h]
\centering
\caption{Comparison of Dynamic Beam Width Strategies}
\vspace{-1em}
\label{tab:DBW}
\begin{tabular}{lccc}
\toprule
\textbf{DBW Strategy} & \textbf{Beam Width} & \textbf{$\Delta$Revenue} & \textbf{$\Delta$QPS} \\ 
\midrule
GR-Base & [512,512,512] & \multicolumn{1}{c}{--} & \multicolumn{1}{c}{--} \\
\midrule
1st-level Reduction & [128,512,512] & $-0.10\%$ & $+27\%$ \\
2nd-level Reduction & [512,128,512] & $-0.23\%$ & $+27\%$ \\
3rd-level Reduction & [512,512,128] & $-0.85\%$ & $+5\%$ \\
\midrule
Progressive Increasing & [128,256,512] & \textbf{$-0.15\%$} & \textbf{+45}\%   \\
\bottomrule
\end{tabular}
\vspace{-1em}
\end{table}

\subsection{Unified Advertisement SID}

\begin{figure*}[t]
        \centering
        \includegraphics[width=0.9\linewidth]{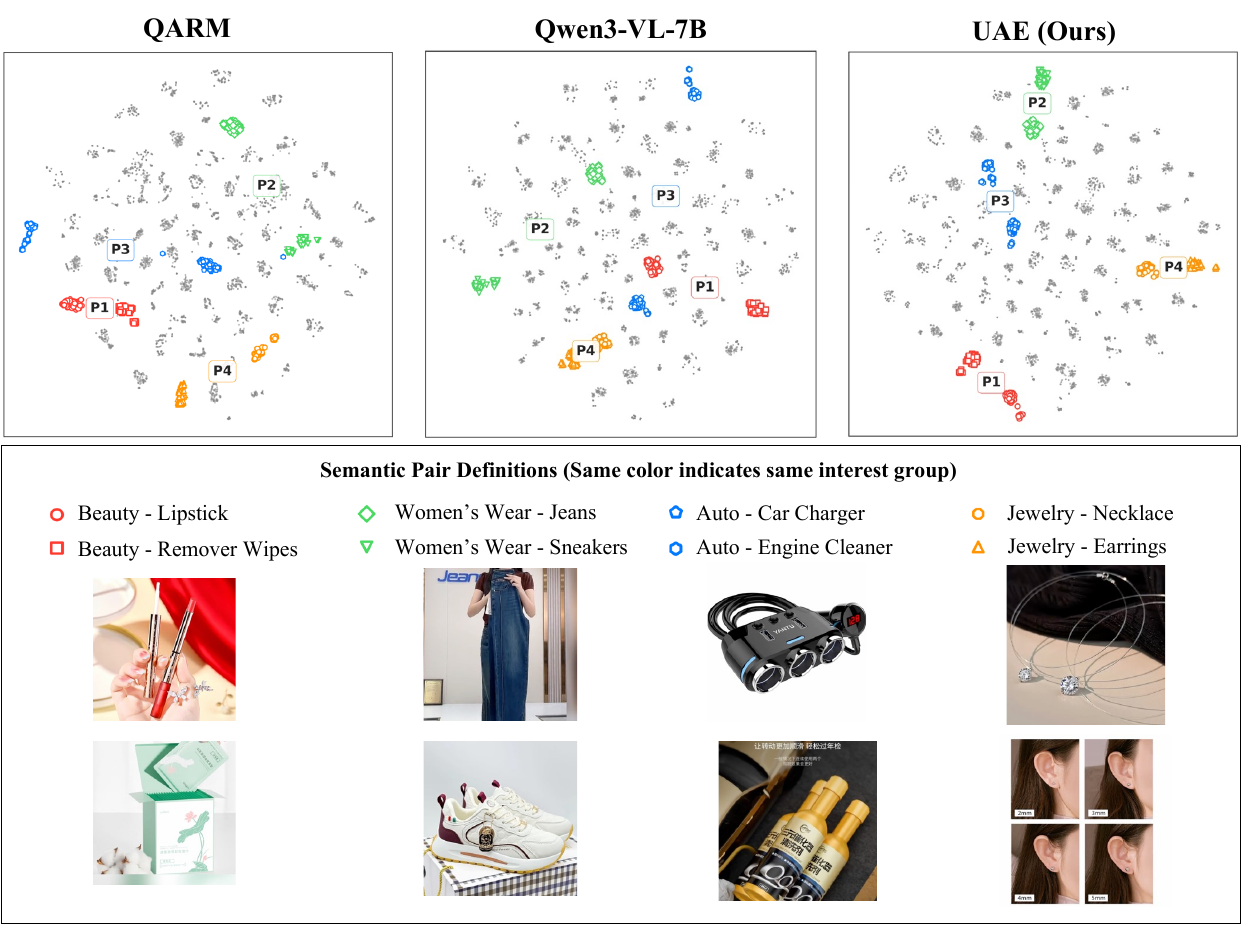}
    \caption{t-SNE visualization of different item embeddings. `P*' represents the clustering center of interest group.}
    \label{fig:visualizations}
\end{figure*}

\subsubsection{SID Evaluation Offline Metrics}

{\small
\begin{align}
\mathrm{Cpr} = \frac{\text{\# Item}}{\text{\# SID}},
\end{align}
\begin{align}
\mathrm{Col} = 1-\frac{\text{\# one-on-one SID}}{\text{\# SID}},
\end{align}
\begin{align}
\mathrm{Util} = \frac{\text{\# Item}}{\text{\# Codebook Space}},
\end{align}
}
where `one-on-one SID' represents the SID associated with only one item.

\subsubsection{Intruction Tuning}

\paragraph{Template 1}
Please perform deep semantic analysis on the input structured text (covering physical products, virtual services, and general entertainment content). Thoroughly fuse discrete fields such as product name, industry classification, detailed category, and brand to accurately determine the vertical domain. Ignore morphological differences and focus on extracting the core category identity and key value attributes (e.g., functional characteristics for physical goods or thematic intent for virtual services), constructing a general semantic representation that precisely defines their essence.

\paragraph{Template 2}
Integrate livestream visual screenshots with multi-dimensional textual information (including streamer profile, regional attributes, livestream title, and user comments/interaction) for deep content understanding. Identify and extract the livestream’s core theme, the streamer’s persona/style, and the current interaction/engagement atmosphere, filtering out nonessential chatter and noise. Construct a semantic representation of the livestream scenario suitable for precise matching and retrieval.

\paragraph{Template 3}
Combine the provided image sequence with product text fields to accurately identify and extract the product's key identity features. Ignore irrelevant marketing phrasing and focus on the product’s essential attributes (industry, category, brand, and core specifications), constructing high-quality product feature representations for precise matching.

\paragraph{Template 4}
Perform deep analysis of virtual products and content-service advertisements (including short dramas, games, apps, etc.). While accounting for differences in product form, focus uniformly on extracting two core feature types: the “core delivered value” and the “narrative marketing strategy.” Accurately identify whether the offering delivers emotional gratification, utilitarian functionality, or cognitive skill-building; and deeply decompose how conversion is driven through scenario construction (e.g., plot conflict, pain-point simulation) and psychological inducements (e.g., suspense hooks, vision framing). Filter out noise and construct a general semantic representation that precisely reflects its benefit attributes and marketing intent.

\paragraph{Template 5}
Analyze advertising video materials intended to drive traffic to livestreams or directly promote products. Integrate visual dynamics and speech interactions to deeply decompose their traffic-conversion logic and persona atmosphere. Identify the video’s core driving elements: the streamer’s persona appeal, product demonstrations and endorsements, the rhetoric or plot devices used to create suspense/expectation (e.g., teasing major offers, emphasizing price advantages, creating urgency), and the livestream’s conveyed core value (premium assortment, cost-effectiveness, expertise, entertainment interaction). Construct semantic feature representations that accurately reflect the marketing and livestream conversion intentions.

\paragraph{Template 6}
Perform a multi-dimensional deep analysis of input physical-product advertisement videos by combining visual frames, scripted voiceover, and OCR-extracted marketing text. While identifying core product attributes (brand, function, appearance), emphasize analysis of content creativity and marketing techniques: determine the presentation format (e.g., narrative dramatization, stress testing, unboxing, factory-sourcing), extract key marketing hooks (e.g., time-limited discounts, pain-point reversals, buy-one-get-one offers), and characterize visual presentation style. Filter out irrelevant noise and construct a high-dimensional semantic representation that encompasses both product features and content-marketing strategy.

\subsubsection{t-SNE Visualization}

Since QARM~\cite{luo2024qarm} lacks ad-specific product information, it can only infer relevance from visual signals, which prevents linking videos with large visual differences that refer to the same product.
Qwen3‑VL‑7B~\cite{qwen3vl} can extract some information but lacks fine‑tuning on real‑world application data and relational signals; after instruction tuning and co‑occurrence learning, the resulting UAE effectively differentiates relationships across interest groups and, within a single interest group, more finely discriminates specific product categories, as illusrated in Figure~\ref{fig:visualizations}.

\end{document}